\newcommand{\mymod}{\mbox{\scriptsize \rm \ mod \ }}
\newcommand{\myomit}[1]{}
\newcommand{\set}[1]{\ensuremath{\left\{#1\right\}}}
\newcommand{\seq}[1]{\ensuremath{\left[#1\right]}}
\newcommand{\setcomp}[2]{\ensuremath{\{#1\ |\ #2\}}}
\newcommand{\funid}[1]{\ensuremath{#1}}
\newcommand{\algid}[1]{\mbox{\sc #1}}
\newcommand{\zero}{\ensuremath{0}}
\newcommand{\DC}{\emph{DC}}
\newcommand{\SAT}{SAT}
\newcommand{\CNF}{CNF}
\newcommand{\var}{\funid{var}}
\newcommand{\lit}{\funid{lit}}
\newcommand{\SPACING}{\algid{Spacing}}
\newcommand{\SPACINGF}{\algid{Spacing}\ensuremath{_{F}}}
\newcommand{\SPACINGH}{\algid{Spacing}\ensuremath{_{h}}}
\newcommand{\SPACINGONE}{\algid{Spacing1}}
\newcommand{\BLINDSPACING}{\algid{Spacing}\ensuremath{_{SB}}}
\newcommand{\occ}{\funid{occ}}
\newcommand{\AllDifferent}{\algid{AllDifferent}}
\newcommand{\oModel}{\algid{$OM$}}
\newcommand{\sModel}{\algid{$SM$}}
\newcommand{\srModel}{\algid{$SR$}}
\newcommand{\bsModel}{\algid{$SB$}}
\newcommand{\vs}{\ensuremath{v}}
\newcommand{\cs}{\ensuremath{c}}
 \gdef\TODOmark{%
   \expandafter\ifx\csname @mpargs\endcsname\relax 
     \expandafter\ifx\csname @captype\endcsname\relax 
       \marginpar{TODO}
     \else
       TODO 
     \fi
   \else
     TODO 
   \fi}
 \gdef\TODO{\@ifnextchar[\TODO@lab\TODO@nolab}
 \long\gdef\TODO@lab[#1]#2{{\bf [\TODOmark #2 ---{\sc #1}]}}
 \long\gdef\TODO@nolab#1{{\bf [\TODOmark #1]}}
\title{Global SPACING Constraint\\
(Technical Report)\thanks{NICTA is funded by
the Australian Government as represented by
the Department of Broadband, Communications and the Digital Economy and
the Australian Research Council. }}
\author{Nina Narodytska\inst{1} \and
Peter Skočovský\inst{2} \and
Toby Walsh\inst{1}}
\institute{NICTA and UNSW, Sydney, Australia \and
Universidade Nova de Lisboa, Lisbon, Portugal}
\begin{document}

\maketitle

\begin{abstract}
We propose a new global $\SPACING$ constraint
that is useful in modeling events 
that are distributed over time, like 
learning units scheduled over a study program or repeated patterns 
in music compositions. 
First, we investigate theoretical properties of the constraint
and identify tractable special cases. We propose efficient $\DC$ filtering
algorithms for these cases. Then, we experimentally
evaluate performance of the proposed algorithms on a music composition
problem and demonstrate that our filtering algorithms outperform the 
state-of-the-art approach for solving this problem.     
\end{abstract}

\section{Introduction}

When studying a new topic,
it is often better to spread the learning over a long period of time
and to revise topics repeatedly.
This ``spacing effect'' was first identified by Hermann Ebbinghaus in 1885 \cite{spacingeffect}.
It has subsequently become ``one of the most studied phenomena in the 100-year history of learning research'' \cite{dempster88}.
It has been observed across domains
(e.g. learning mathematical concepts or a foreign language, as well as learning a motor skill),
across species (e.g. in pigeons, rats and humans),
across age groups and individuals, and across timescales (e.g. from seconds to months).
To enable learning software to exploit this effect,
Novikoff, Kleinberg and Strogratz have proposed a simple mathematical model \cite{nkspnas12}.
They consider learning a sequence of educational units,
and model the spacing effect with a constraint defined by two sequences,
$A = \seq{a_1, a_2, \ldots}$ and $B = \seq{b_1, b_2, \ldots}$.
For each unit being taught, the $i + 1$st time that it should be reviewed
is between $a_i$ and $b_i$ time steps after the $i$th time.

This technical report, after giving a brief background in the following section,
describes and analyses the Global \SPACING\ Constraint in the section 3.
Then the sections 4, 5, 6, 7 and 8 explore restrictions of the constraint
and identify tractable special cases. In the sections 5 and 6 we propose efficient
filtering algorithms for these tractable cases.
In the section 9, we describe a useful application of the
constraint to solving a music composition problem.
The section 10 presents experimental evaluation of the \SPACING\ constraint
on the music composition problems.

\section{Background}

\subsection{Constraint Satisfaction Problems (CSP)}

A constraint satisfaction problem consists of a set of
variables, each with a finite domain of values, and a set of
constraints specifying allowed combinations of values for
subsets of variables. We use capital letters for variables 
(e.g. $X$ and $Y$), and lower case for values (e.g. $d$ and $d'$).
We write $D(X)$ for the domain of a variable $X$ and
$D = \bigcup_{i = 1}^{n} D(X_i)$ for the set of all the domain values.
Assigning a value $d \in D(X)$ to a variable $X$ means removing all the other values from its domain.
A solution is an assignment of values to the variables satisfying the constraints.
Constraint solvers typically explore partial assignments enforcing
a local consistency property using either specialized or general
purpose propagation (or filtering) algorithms.
A \emph{support} for a constraint $C$ is an assignment that
assigns to each variable some value from its domain and satisfies $C$.
A constraint is \emph{domain consistent} (\DC) iff for each
variable $X_i$, every value in $D(X_i)$ belongs to some support.

\subsection{Matching Theory}

We also give some background on matching.
A \emph{bipartite graph} is a graph $G=(U,V,E)$ with
the set of nodes partitioned between $U$ and $V$ such that
there is no edge between two nodes in the same partition.
A \emph{matching} in a graph $G$ is a subset of $E$ where no two edges have a node in common.
A \emph{maximum matching} is a matching of maximum cardinality.
Regin\cite{regin1} proposed efficient propagator based on a maximum matching algorithm.

\subsection{Propositional Satisfiability (SAT)}

A \emph{Propositional Satisfiability} (\SAT) is a problem of
finding a model of a propositional formula.
The propositional formula is usually in \emph{Conjunctive Normal Form} (\CNF),
which is a set of clauses. We consider the clauses to be sets of literals.
A literal is either a propositional variable (i.e. $p$) or
a negated propositional variable (i.e. $\neg p$).
The set of all variables from $\phi$ is $\var(\phi)$ and
$\neg \var(\phi) = \setcomp{\neg p}{p \in \var(\phi)}$.
Without loss of generality we can order the variables from $\var(\phi)$ into sequence $\seq{p_1, \ldots, p_\vs}$
and the clauses of $\phi$ into sequence $\seq{C_1, \ldots, C_\cs}$.
Let $\lit(\phi) = \var(\phi) \cup \neg\var(\phi)$.
Then a set of literals $I \subseteq \lit(\phi)$ is an interpretation of $\phi$ if it is maximal and consistent,
i.e. it does not contain a pair of complementary literals $l \in I \rightarrow \overline{l} \notin I$.
An interpretation $I$ of $\phi$ is a model of $\phi$ if
it contains at least one literal from each clause of $\phi$.
We also use the following notation: if $L$ is a set of literals, then
$L' = \setcomp{l'}{l \in L}$ and $L^i = \setcomp{l^i}{l \in L}$.

\section{The Global \SPACING\ Constraint}

First we describe the Global \SPACING\ Constraint
and then its further restrictions in the following sections.
For simplicity, we introduce a function that
returns the number of occurrences of a domain value $d \in D$
in a sequence of variables $X = \seq{X_1, \ldots, X_n}$.
It is $\occ(d, X) = |\setcomp{i}{X_i = d, 1 \leq i \leq n}|$
Now we define the Global \SPACING\ Constraint as follows:

\begin{definition}\label{def:spacing}
Let $X = \seq{X_1, \ldots, X_n}$ be a sequence of $n$ variables and
let $S \subseteq D$ be a set of domain values.
Let $A = \seq{a_1, \ldots, a_{k-1}}$ and $B = \seq{b_1, \ldots, b_{k-1}}$ be sequences of natural numbers such that
$a_i \leq b_i$ for $1 \leq i \leq k-1$.
Then $\SPACING(S, A, B, X)$ holds iff
for all $i$ s.t. $1 \leq i \leq k-1$ and for all $d \in S$ it holds that
if there exists $j$ s.t. $1 \leq j \leq n$ and $X_j = d$ and $\occ(d, \seq{X_1, \ldots, X_j}) = i$ then
there exists $j' \leq n$ s.t. $j + a_i \leq j' \leq j + b_i$ and $X_{j'} = d$ and $\occ(d, \seq{X_1, \ldots, X_{j'}}) = i+1$.
\end{definition}

In other words, each value $d \in S$ either does not occur in $X$ at all,
or it occurs on at least $k$ different places and
the distances between the places are determined by the sequences $A$ and $B$.
The \emph{minimum distance condition} forces the $i+1$st occurrence of the value $d$
to be no closer than $a_i$ places from its $i$th occurrence and
the \emph{maximum distance condition} forces the $i+1$st occurrence
to be no further than $b_i$ places from the $i$th occurrence.

For example, suppose we need to prepare a playlist for a radio station.
The \SPACING\ Constraint allows us to specify that any of top ten songs is
either not played at all, or it is played at least four times in the 360 song long playlist and
it is not repeated more frequently than every 30 songs, but at least every 90 songs.
The constraint would be imposed on the sequence $X = \seq{X_1, \ldots, X_{360}}$ of
$n = 360$ variables. The domain values would represent the songs and $S \subseteq D$
would be the set of the top ten songs. The number of spaced occurrences is $k = 4$,
so the sequences $A$ and $B$ would be of length $k-1 = 3$.
Overall the constraint would be specified as $\SPACING(S, \seq{30,30,30}, \seq{90,90,90}, X)$.

\begin{theorem}\label{thm:intractS}
Enforcing \DC\ on the Global \SPACING\ Constraint is NP-hard.
\end{theorem}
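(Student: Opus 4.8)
The plan is to reduce from propositional satisfiability. Given a \CNF{} formula $\phi$ with variables $p_1,\dots,p_\vs$ and clauses $C_1,\dots,C_\cs$, I will build in polynomial time an instance of the Global \SPACING\ Constraint --- a sequence $X=\seq{X_1,\dots,X_n}$ together with explicit domains $D(X_j)$, a set $S\subseteq D$, and sequences $A,B$ of length $k-1$ --- such that the constraint has a support iff $\phi$ is satisfiable. From such a reduction the theorem follows by the standard argument: running a polynomial \DC\ filtering algorithm on the instance would wipe out some domain exactly when the constraint is unsatisfiable, hence would decide satisfiability of $\phi$; so unless P $=$ NP there is no such algorithm, and enforcing \DC\ on \SPACING\ is NP-hard. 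One only has to check in addition that $n$, $|S|$, and the numbers in $A$ and $B$ stay polynomial in $|\phi|$.

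For the reduction itself, the idea is to turn values of $S$ into ``literal tokens'' and to exploit the way Definition~\ref{def:spacing} couples two features: a token that appears at all must appear at least $k$ times, and consecutive appearances are pinned into the windows given by $A$ and $B$. For each $p_\ell$ I introduce tokens for $p_\ell$ and for $\neg p_\ell$, both in $S$, and use variables with singleton (or two-element) domains to force exactly one of the two to be ``activated'', so that its forced chain of occurrences encodes the truth value of $p_\ell$. The positions $1,\dots,n$ are laid out with one ``clause block'' per clause $C_m$, and the domains of the variables inside block $m$, together with the spacing parameters, are chosen so that the block can be completed only if some literal token of a literal in $C_m$ is threaded through it; exclusivity of positions (each $X_j$ holds a single value) then makes block $m$ satisfiable precisely when $C_m$ has a true literal. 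All positions not used by tokens are filled with fresh values outside $S$, and a small gadget forbids activating both tokens of the same variable.

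The main obstacle is the gadget engineering. One must choose $k$, the window bounds $a_i,b_i$, the block lengths, and the individual domains so that (i) the arithmetic is globally consistent --- the partial sums $\sum a_i$ and $\sum b_i$ must align the forced chains with exactly the intended clause blocks while still fitting inside $[1,n]$ --- and, more delicately, (ii) there are no spurious supports: a satisfying assignment of the \SPACING\ instance must not be able to cheat by dropping a token into a clause block whose clause it does not belong to, by activating conflicting tokens, or by exploiting the ``does not occur at all'' escape clause to shirk a clause obligation. Proving that every model of $\phi$ yields a support and, conversely, that every support projects to a model is where the real work lies; the polynomial-size and polynomial-time bookkeeping is then routine. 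It will probably be cleanest to reduce from a bounded-width variant such as $3$-\SAT\ (or one-in-three \SAT), which keeps the clause blocks uniform and small.
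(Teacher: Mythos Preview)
Your proposal is a plan, not a proof: you describe what the gadget must accomplish (forcing exactly one of $p_\ell,\neg p_\ell$ to be active, threading active tokens through clause blocks, preventing cheating) but you never actually specify $k$, the $a_i,b_i$, the block layout, or the domains. You yourself flag the gap --- ``the main obstacle is the gadget engineering'' and ``proving \dots\ is where the real work lies'' --- and then stop. As it stands there is nothing to verify; the construction and both directions of the correctness argument are absent.

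More importantly, you are making the problem much harder than it is. The paper's reduction does \emph{not} exploit the spacing windows at all: it sets $a_i=1$ (so the minimum-distance condition is vacuous) and $b_i=\vs+1$ constant, and the whole argument is a pigeonhole count on occurrences. Concretely, with $k=\cs+1$ and the variables laid out in a $(\cs+1)\times(\vs+1)$ grid where column $i\le\vs$ has domain $\{p_i,\neg p_i\}$ in every row and column $\vs+1$ has domain $C_j$ in row $j$, any value in $S=\lit(\phi)$ that appears at all must appear $\ge\cs+1$ times; but it has at most $2\cs+1$ available slots, $\cs+1$ of which are shared with its complement, so the complement cannot also reach $\cs+1$ occurrences. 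This single counting observation simultaneously enforces consistency (exactly one of $p_i,\neg p_i$ fills column $i$) and clause satisfaction (the value chosen in the clause cell must be one that fills its column). No window-alignment arithmetic, no block gadgets, no separate consistency gadget, and no need to restrict to $3$-\SAT. Your intended route via carefully tuned $a_i,b_i$ and chain-threading may be salvageable, but it is strictly more intricate and you have not yet supplied any of its moving parts.
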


\begin{proof}
We prove this by reduction of \SAT\ to the problem of finding a support for \SPACING.
Let $\phi$ be an arbitrary \CNF\ with $\vs$ propositional variables and $\cs$ clauses.
We will abuse the notation slightly by using literals as domain values.
There is a model of $\phi$ iff there is a support for
the constraint $\SPACING(S, \seq{a_1, \ldots, a_{k-1}}, \seq{b_1, \ldots, b_{k-1}}, X)$ with
\begin{itemize}
	\item $S = \lit(\phi)$
	\item $k = \cs+1$
	\item $a_i = 1$ and $b_i = \vs+1$ for $1 \leq i \leq k-1$
	\item $X$ is a sequence of $\vs\cs+\vs+\cs$ variables with domains as described below.
\end{itemize}
If we cut the sequence $X$ into slices $\vs+1$ variables long and put the slices under each other,
we obtain a table with $\cs+1$ rows where the last cell is empty.
For simplicity the variables will be indexed $X_{j,i}$, where
$j$ is a row number and $i$ is a column number,
i.e. $X_{j,i}$ stands for $X_{(j-1)(\vs+1) + i}$.
The first $\vs$ columns will represent the propositional variables of $\phi$, so
\begin{itemize}
	\item the domains of variables $X_{j,i}$ for $1 \leq i \leq \vs$ and $1 \leq j \leq \cs+1$ are $\set{p_i, \neg p_i}$.
\end{itemize}
The last column represents $\phi$, each clause in one row, so
the domains are sets of literals from particular clauses. Simply,
\begin{itemize}
	\item the domains of variables $X_{j,\vs+1}$ for $1 \leq j \leq \cs$ are $C_j$.
\end{itemize}

For example take CNF $\phi = \seq{\set{\neg p, q, r}, \set{\neg q, r}, \set{\neg p, \neg q}, \set{p, q}}$
the equivalent constraint would be $\SPACING(S, \seq{1,1,1,1}, \seq{4,4,4,4}, X)$ with
$S = \set{p, q, r, \neg p, \neg q, \neg r}$ and $X$ would contain $19$ variables with
domains ordered into the following table:
\begin{center}
\begin{tabular}{r|cccc}
$i = $ & 1 & 2 & 3 & 4 \\
\hline
$j = 1$ & $\set{p, \neg p}$ & $\set{q, \neg q}$ & $\set{r, \neg r}$ & $\set{\neg p, q, r}$ \\ 
$j = 2$ & $\set{p, \neg p}$ & $\set{q, \neg q}$ & $\set{r, \neg r}$ & $\set{\neg q, r}$ \\ 
$j = 3$ & $\set{p, \neg p}$ & $\set{q, \neg q}$ & $\set{r, \neg r}$ & $\set{\neg p, \neg q}$ \\ 
$j = 4$ & $\set{p, \neg p}$ & $\set{q, \neg q}$ & $\set{r, \neg r}$ & $\set{p, q}$ \\ 
$j = 5$ & $\set{p, \neg p}$ & $\set{q, \neg q}$ & $\set{r, \neg r}$ & \\ 
\end{tabular}
\end{center}

Note that $D \subseteq S$.
So if a value occurs in a support for the constraint, it has to occur in it at least $\cs+1$ times.
Due to construction of the domains, one value can occur on at most $2\cs+1$ places
(in one of the first $\vs$ columns and in the last column).
Each value is sharing $\cs+1$ of these places with its complement,
thus if a value occurs in a support, it occupies at least $\cs+1$ of $2\cs+1$ places,
so its complement does not have space to repeat enough times to satisfy the constraint.
Hence if a value occurs in a support it will occupy one of the first $\vs$ columns completely,
because its complement cannot.

Consequently, if we have a support for the constraint,
the set of values assigned to the first $\vs$ variables is a model of $\phi$,
because each value selected in the last column has to occur in it and
complement of none of these values can occur in it.

On the other hand, if we have a model of $\phi$, we can obtain a support for the constraint by
assigning the literals from the model to all the variables as a values.
This is always possible, because a model interprets all the literals and it is consistent,
so there will be exactly one value for each of the variables in the first $\vs$ columns, and
a model satisfies each clause, so there will be some value for each variable in the last column.
This will really be a support for the constraint, because
each value that will occur in it will occur in each row at least once.
%
\qed
\end{proof}

Please, note that the proof does not make use of the condition of minimal spacing
imposed by the sequence $A$.
Also the sequence $B$ is constant ($b_i = b_{i+1}$ for all $1 \leq i \leq k-2$),
so the proof holds also for much simpler constraint.
In fact, the proof does not use the condition that the values have to be repeated,
it is based barely on the number of occurrences.
The reason why \SPACING\ is NP-hard is that there is a possibility for the values to
either occur in the support or not.
Next we will examine a case that does not provide such a possibility.

\section{The Forced Global \SPACING\ Constraint (\SPACINGF)}

This section describes a restriction of the constraint where
all the values from $S$ are forced to occur in the sequence and
later it analyses also restriction with the distance conditions relaxed.

\begin{definition}\label{def:spacingF}
Let $X = \seq{X_1, \ldots, X_n}$ be a sequence of $n$ variables and
let $S \subseteq D$ be a set of domain values.
Let $A = \seq{a_1, \ldots, a_{k-1}}$ and $B = \seq{b_1, \ldots, b_{k-1}}$ be sequences of natural numbers such that
$a_i \leq b_i$ for $1 \leq i \leq k-1$.
Then $\SPACINGF(S, A, B, X)$ holds iff
$\SPACING(S, A, B, X)$ holds and all $S \subseteq \set{X_1, \ldots, X_n}$.
\end{definition}
This means that each value $v \in S$ occurs in the sequence on at least $k$ different places.

\begin{theorem}\label{thm:intractSF}
Enforcing \DC\ on the Global \SPACINGF\ Constraint is NP-hard.
\end{theorem}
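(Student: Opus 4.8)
The plan is to reduce \SAT\ to the problem of finding a support for \SPACINGF, reusing the construction from the proof of Theorem~\ref{thm:intractS}. Recall that that reduction turns a \CNF\ $\phi$ with $\vs$ variables and $\cs$ clauses into a \SPACING\ instance with $S = \lit(\phi)$, $k = \cs+1$, $a_i = 1$, $b_i = \vs+1$, and a sequence $X$ of $n = \vs\cs+\vs+\cs$ variables (the ``table''), such that $\phi$ is satisfiable iff this instance has a support. The obstacle for \SPACINGF\ is exactly the feature Theorem~\ref{thm:intractS} exploits: in the table $X$ the literals $p_i$ and $\neg p_i$ compete for the same cells, so at most one of the two can occur, whereas \SPACINGF\ demands that \emph{every} value of $S$ occurs at least $k$ times. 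Thus the table alone can never be a \SPACINGF\ support, and we must add room for the ``losing'' literals to occur without disturbing the encoding.

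To do this I would append to $X$ two kinds of extra variables, obtaining a longer sequence $X'$ while keeping $S$, $A$, $B$, $k$ unchanged. First, an \emph{isolation buffer} of $\vs+1$ variables, each with the singleton domain $\{\delta\}$ for a fresh value $\delta \notin S$. Second, for every $d \in S$, a \emph{reservoir} $R_d$ of $\cs+1$ consecutive variables, each with the singleton domain $\{d\}$; the reservoirs are concatenated after the buffer in any fixed order. The new instance is $\SPACINGF(S, A, B, X')$, of size $n + (\vs+1) + 2\vs(\cs+1)$, polynomial in $|\phi|$.

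For correctness I would prove two claims. (1) \emph{Any \SPACINGF\ support for $X'$ restricts to a \SPACING\ support for the original table $X$}: since every window has width at most $b_i = \vs+1$, and the buffer, which no value of $S$ may enter, separates position $n$ from every reservoir, whenever a value $d\in S$ has its $i$-th occurrence inside $X$ with $i<k$, the required $(i{+}1)$-st occurrence again lies inside $X$; by induction all of $d$'s first $k$ occurrences and their spacings stay within $X$, so the restriction of the support to $X$ satisfies \SPACING, which by Theorem~\ref{thm:intractS} forces $\phi$ to be satisfiable. (2) \emph{Any \SPACING\ support for $X$ extends to a \SPACINGF\ support for $X'$}: fill the buffer with $\delta$ and each reservoir $R_d$ with $d$. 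Every $d\in S$ then occurs at least $\cs+1$ times, and because in a \SPACING\ support each value occurs either $0$ times or at least $\cs+1$ times (Definition~\ref{def:spacing}), the occurrences placed in $R_d$ are either all of $d$'s occurrences (a gap-free block, hence spaced by $1\in[1,\vs+1]$) or occurrences numbered beyond $k$, which are unconstrained; either way no spacing condition is broken, and $\delta\notin S$ is unconstrained. Combining (1) and (2): $\phi$ is satisfiable iff $\SPACINGF(S, A, B, X')$ has a support, and, exactly as in Theorem~\ref{thm:intractS}, enforcing \DC\ would in particular decide support existence, so it is NP-hard.

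The step I expect to be the crux is the isolation argument in (1). It rests on two design choices that must be made deliberately: the buffer must have length at least $\max_i b_i$ so that no window originating in $X$ can reach a reservoir, and one needs $a_i=1$ so that each reservoir is a gap-free block of consecutive positions (otherwise the holes inside a reservoir would have to be filled, re-introducing bookkeeping of unconstrained values). Both conditions hold because the instances produced by Theorem~\ref{thm:intractS} already have $a_i=1$ and constant $b_i=\vs+1$; so the reduction is in effect from those specific \SPACING\ instances, equivalently from \SAT.
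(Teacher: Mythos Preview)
Your argument is correct. Both your proof and the paper's start from the Theorem~\ref{thm:intractS} table and append a barrier of width $\vs+1$ that no value of $S$ can cross, followed by extra positions where the ``losing'' literals can accumulate their required $k$ occurrences. The difference is in how that extra space is realised. The paper appends a second block of $\cs+1$ rows whose first $\vs$ columns again carry the domains $\{p_i,\neg p_i\}$, and then re-invokes the counting argument of Theorem~\ref{thm:intractS} to show that each complementary pair splits between the two halves. Your reservoirs with singleton domains $\{d\}$ instead force every $d\in S$ to occur outright, so no counting argument is needed for the appended part and claim~(2) becomes immediate. The price is a slightly longer instance ($2\vs(\cs+1)$ reservoir cells versus roughly $\vs(\cs+1)$ in the paper's negative block), while the gain is modularity: Theorem~\ref{thm:intractS} is genuinely used as a black box. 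Both constructions keep $a_i=1$ throughout, so your reduction, like the paper's, establishes hardness even without the minimal-distance condition.
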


\begin{proof}[no minimal distance condition]
We prove this by reduction of \SAT\ to the problem of finding a support for \SPACINGF.
The reduction is the same as in the proof of Theorem~\ref{thm:intractS}, except
the sequence $X$ has additional $(\cs+2)(\vs+1)+1$ variables (in total $(2\cs+3)(\vs+1)$ variables)
with domains as follows:
If we order the variables into the same table as before, with rows of length $\vs+1$,
the variables in the rest of the last column, as well as the variables in the $\cs+2$nd row
can only take a dummy value \zero
\begin{itemize}
	\item the domains of variables $X_{j,\vs+1}$ for $\cs+1 \leq j \leq 2\cs+3$ and
	variables $X_{\cs+2,i}$ for $1 \leq i \leq \vs$ are $\set{\zero}$,
	where $\zero \not\in S$
\end{itemize}
and the first $\vs$ cells of remaining $\cs+1$ rows are
again the representation of propositional variables, so
\begin{itemize}
	\item the domains of variables $X_{j,i}$ for $1 \leq i \leq \vs$ and
	$\cs+3 \leq j \leq 2\cs+3$ are again $\set{p_i, \neg p_i}$.
\end{itemize}

For example, the CNF $\phi = \seq{\set{\neg p, q, r}, \set{\neg q, r}, \set{\neg p, \neg q}, \set{p, q}}$
from the previous example would be reduced to the constraint
$\SPACINGF(S, \seq{1,1,1,1}, \seq{4,4,4,4}, X)$ with
$S = \set{p, q, r, \neg p, \neg q, \neg r}$ and $X$ would contain $44$ variables with
domains ordered into the following table:
\begin{center}
\begin{tabular}{r|cccc}
$i = $ & 1 & 2 & 3 & 4 \\
\hline
 $j = 1$ & $\set{p, \neg p}$ & $\set{q, \neg q}$ & $\set{r, \neg r}$ & $\set{\neg p, q, r}$ \\ 
 $j = 2$ & $\set{p, \neg p}$ & $\set{q, \neg q}$ & $\set{r, \neg r}$ & $\set{\neg q, r}$ \\ 
 $j = 3$ & $\set{p, \neg p}$ & $\set{q, \neg q}$ & $\set{r, \neg r}$ & $\set{\neg p, \neg q}$ \\ 
 $j = 4$ & $\set{p, \neg p}$ & $\set{q, \neg q}$ & $\set{r, \neg r}$ & $\set{p, q}$ \\ 
 $j = 5$ & $\set{p, \neg p}$ & $\set{q, \neg q}$ & $\set{r, \neg r}$ & $\set{\zero}$ \\ 
 $j = 6$ & $\set{\zero}$ & $\set{\zero}$ & $\set{\zero}$ & $\set{\zero}$ \\ 
 $j = 7$ & $\set{p, \neg p}$ & $\set{q, \neg q}$ & $\set{r, \neg r}$ & $\set{\zero}$ \\ 
 $j = 8$ & $\set{p, \neg p}$ & $\set{q, \neg q}$ & $\set{r, \neg r}$ & $\set{\zero}$ \\ 
 $j = 9$ & $\set{p, \neg p}$ & $\set{q, \neg q}$ & $\set{r, \neg r}$ & $\set{\zero}$ \\ 
$j = 10$ & $\set{p, \neg p}$ & $\set{q, \neg q}$ & $\set{r, \neg r}$ & $\set{\zero}$ \\ 
$j = 11$ & $\set{p, \neg p}$ & $\set{q, \neg q}$ & $\set{r, \neg r}$ & $\set{\zero}$ \\ 
\end{tabular}
\end{center}

We can divide the sequence into 2 parts separated by the $\cs+2$nd row.
We will call the first one \emph{positive} part and the second one \emph{negative} part.
If a value from $S$ occurs in one part, it will occur at least $\cs+1$ times in that part,
because the gap of \zero-s between the parts is longer than any $b_i$, so
all the necessary repetitions must occur in one part.
Also if a value $d \in S$ occurs in one part, its complement $\overline{d}$ cannot occur in the same part
because of a counting argument analogous to the one in the proof of Theorem~\ref{thm:intractS}, so,
while all the values from $S$ must occur in the support,
the complement $\overline{d}$ must occur in the other part (where $d$ will not occur due to the same argument).

Thanks to this, the proof of Theorem~\ref{thm:intractS} holds for the positive part and
the values that cannot occur in the positive part will occur in the negative part.
It is obvious that there is enough space for all the values:
$|S| = 2\vs$ while $\vs$ values occur in the positive part and the other $\vs$ values occur in the negative part.
Hence, we can obtain a support for the constraint from a model of $\phi$
and a model from a support in the same manner as in the proof of Theorem~\ref{thm:intractS}.
\qed
\end{proof}

As before, this proof does not make use of the condition on the minimal distance between
two consecutive values from $S$ ($A$ is a sequence of 1-s).
In fact, also if we relax the condition on the maximal distance between
two consecutive values from $S$ ($B$ will be a sequence of $n$-s),
the constraint is still intractable.

\begin{theorem}\label{thm:intractB}
Enforcing \DC\ on the Global \SPACINGF\ Constraint with $b_i = n$ for $1 \leq i \leq k-1$ is NP-hard.
\end{theorem}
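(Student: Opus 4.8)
The plan is to reduce \SAT\ to the problem of finding a support for the \SPACINGF\ constraint with $b_i = n$, following the overall recipe of Theorems~\ref{thm:intractS} and~\ref{thm:intractSF}, but moving where the work is done. In both earlier proofs the minimum-distance sequence $A$ plays no role at all: it is the finiteness (and smallness) of the $b_i$, together with a long run of the dummy value \zero, that pins every occurrence of a literal into one ``column'' of the table and forces that column to be filled completely. Setting $b_i = n$ destroys this mechanism, and if $A$ were also trivial the forced constraint would merely assert ``every value of $S$ occurs at least $k$ times'', a global-cardinality lower bound that is polynomial. So the reduction has to be re-engineered so that the \emph{lower} bounds on consecutive gaps now carry it.

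The idea I would pursue is that a large minimum distance is as rigid as a small maximum distance. Taking all the $a_i$ equal to a fixed block length $w$ and laying the variables out in a table with rows of length $w$, a value that must occur $k$ times and is available only at the $k$ evenly-spaced cells of one ``column'' is forced to occupy exactly that column: $k$ occurrences at pairwise distance $\ge w$ span at least $(k-1)w$, which already exhausts the column, so the span argument leaves no slack. This recovers the ``a value occurring in a support fills its column completely'' step that Theorems~\ref{thm:intractS} and~\ref{thm:intractSF} got from the maximum-distance condition. On top of it I would keep as much of the machinery of Theorem~\ref{thm:intractSF} as possible: because \SPACINGF\ forces all of $S = \lit(\phi)$ to occur, the layout must leave room for both $p_i$ and $\neg p_i$ --- hence a ``positive'' and a ``negative'' region as in Theorem~\ref{thm:intractSF} --- and the counting argument of Theorem~\ref{thm:intractS}, that a variable and its negation cannot both meet the quota inside the region that fixes the assignment, is what forces a consistent truth assignment there, while the clause cells force every clause to be satisfied. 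The converse direction, model $\Rightarrow$ support, goes through unchanged by writing the model's literals into all admissible cells, and the whole reduction is clearly polynomial.

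The step I expect to be the main obstacle is the clause gadget --- more precisely, reconciling the two jobs the distance sequence now has to do. Under the old maximum-distance condition a literal could sit simultaneously in its column and at the clause cells of the clauses it occurs in, because those cells lie only a fraction of a row away and the small gaps they create are harmless. Under the minimum-distance condition the situation is reversed: a clause cell placed next to the column cells creates a gap smaller than $w$ and is forbidden outright, so the verbatim Theorem~\ref{thm:intractS} table is unsatisfiable for \emph{every} $\phi$. The reduction must therefore detach clause checking from the columns --- plausibly by placing the clause cells in a region far enough away that the $\ge w$ gaps hold automatically, and padding with \zero-valued slack cells so that the model-derived placements remain feasible --- while still exploiting the $a_i = w$ rigidity on the columns themselves. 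Verifying that the resulting joint layout admits every intended placement and excludes every unintended one --- a literal split between the two regions, a column left partly empty while its literal reaches the quota elsewhere, both literals of a variable surviving at once --- is where the real effort lies; the rest is routine bookkeeping on top of Theorems~\ref{thm:intractS} and~\ref{thm:intractSF}.
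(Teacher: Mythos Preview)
Your high-level plan is exactly the paper's: once $b_i = n$ is trivial, make the minimum distance do all the work by setting every $a_i$ to (roughly) a row length, lay $X$ out as a table with $k$ rows, and use a positive region for the ``true'' literals and a negative region for their complements, separated by \zero-padding. You have also correctly located the crux: the clause gadget, which in Theorems~\ref{thm:intractS} and~\ref{thm:intractSF} lives harmlessly next to the variable columns, now seems to create illegal short gaps.

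Where your plan diverges from the paper is in the proposed fix. You reason in terms of rigid single \emph{columns} --- a value available only in one column, so the span argument nails it there --- and therefore conclude the clause cells must be ``detached'' and pushed to a far-away region. The paper instead relaxes the rigidity from columns to \emph{parts}: the positive part is the first $2\vs+1$ columns, the negative part a block of $\vs$ columns further along, and the two parts are separated by enough \zero-padding that the minimum-distance argument shows a value in one part of row $j$ cannot move to the other part in row $j{+}1$. Crucially, the clause cell is column~$1$ of the positive part, and columns $2,\ldots,2\vs{+}1$ have domains $\{p_i,\zero\}$ and $\{\neg p_i,\zero\}$ --- optional, not mandatory. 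So in the row of clause $C_j$ the literal chosen for $C_j$ sits at column~$1$ \emph{instead of} its own positive-part column (which takes \zero\ there); there is no short gap because the literal occurs only once in the row. The ``stays in the same part'' lemma then forces that literal into the positive part of every row, i.e.\ into the model. Your worry that ``a clause cell placed next to the column cells creates a gap smaller than $w$'' evaporates once the positive-part cells are made optional; your suggestion to detach the clause cells to a distant region is not needed and would make it harder to tie clause-satisfaction back to the model. The remaining verification --- that one literal of each complementary pair lands in each part, and that a model yields a support --- is indeed routine bookkeeping along the lines of Theorem~\ref{thm:intractS}.
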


\begin{proof}[no maximal distance condition]
We prove this by reduction of \SAT\ to the problem of finding a support for \SPACING.
Let $\phi$ be an arbitrary \CNF\ with $\vs$ propositional variables and $\cs$ clauses.
There is a model of $\phi$ iff there is a support for
the constraint $\SPACINGF(S, \seq{a_1, \ldots, a_{k-1}}, \seq{b_1, \ldots, b_{k-1}}, X)$ with
\begin{itemize}
	\item $S = \lit(\phi)$
	\item $k = \cs$
	\item $a_i = 5\vs+1$ and $b_i = (7\vs+1)\cs$ for $1 \leq i \leq k-1$
	\item $X$ is a sequence of $(7\vs+1)\cs$ variables with the domains as described below.
\end{itemize}
If we organize the variables from $X$ into a table with $\cs$ rows and $7\vs+1$ columns,
the first column will represent $\phi$ in the way that
\begin{itemize}
	\item the domains of variables $X_{j,1}$ for $1 \leq j \leq \cs$ are $C_j$,
\end{itemize}
following $2\vs$ columns together with the first column will represent satisfied literals
\begin{itemize}
	\item the domains of variables $X_{j, i+1}$ for $1 \leq i \leq \vs$ and $1 \leq j \leq \cs$ are $\set{p_i, \zero}$ and
	\item the domains of variables $X_{j, i+(1+\vs)}$ for $1 \leq i \leq \vs$ and $1 \leq j \leq \cs$ are $\set{\neg p_i, \zero}$,
\end{itemize}
following $2\vs$ columns will be a padding of \zero-s
\begin{itemize}
	\item the domains of variables $X_{j, i+(1+2\vs)}$ for $1 \leq i \leq 2\vs$ and $1 \leq j \leq \cs$ are $\set{\zero}$,
\end{itemize}
following $\vs$ columns will represent unsatisfied literals
\begin{itemize}
	\item the domains of variables $X_{j, i+(1+4\vs)}$ for $1 \leq i \leq \vs$ and $1 \leq j \leq \cs$ are $\set{p_i, \neg p_i}$
\end{itemize}
and the last $2\vs$ columns will be again a padding of \zero-s.
\begin{itemize}
	\item the domains of variables $X_{j, i+(1+5\vs)}$ for $1 \leq i \leq 2\vs$ and $1 \leq j \leq \cs$ are $\set{\zero}$.
\end{itemize}

For example, the CNF from our running example would be reduced to the constraint
$\SPACINGF(S, \seq{16,16,16}, \seq{88,88,88}, X)$ with
$S = \set{p, q, r, \neg p, \neg q, \neg r}$ and $X$ would contain $88$ variables with
domains ordered into the following table:
\begin{center}
\newcommand{\compd}[1]{\scriptsize\begin{tabular}{>{$}c<{$}} #1 \end{tabular}}
\begin{tabular}{r|*{22}{|>{\centering\arraybackslash}c}}
$i = $ & 1 & 2 & 3 & 4 & 5 & 6 & 7 & 8 & 9 & 10 & 11 & 12 & 13 & 14 & 15 & 16 & 17 & 18 & 19 & 20 & 21 & 22 \\
\hline
\hline
$j = 1$ & \compd{\neg p \\ q \\ r} & \compd{p \\ \zero} & \compd{q \\ \zero} & \compd{r \\ \zero} & \compd{\neg p \\ \zero} & \compd{\neg q \\ \zero} & \compd{\neg r \\ \zero} & \compd{\zero} & \compd{\zero} & \compd{\zero} & \compd{\zero} & \compd{\zero} & \compd{\zero} & \compd{p \\ \neg p} & \compd{q \\ \neg q} & \compd{r \\ \neg r} & \compd{\zero} & \compd{\zero} & \compd{\zero} & \compd{\zero} & \compd{\zero} & \compd{\zero} \\
\hline
$j = 2$ & \compd{\neg q \\ r} & \compd{p \\ \zero} & \compd{q \\ \zero} & \compd{r \\ \zero} & \compd{\neg p \\ \zero} & \compd{\neg q \\ \zero} & \compd{\neg r \\ \zero} & \compd{\zero} & \compd{\zero} & \compd{\zero} & \compd{\zero} & \compd{\zero} & \compd{\zero} & \compd{p \\ \neg p} & \compd{q \\ \neg q} & \compd{r \\ \neg r} & \compd{\zero} & \compd{\zero} & \compd{\zero} & \compd{\zero} & \compd{\zero} & \compd{\zero} \\
\hline
$j = 3$ & \compd{\neg p \\ \neg q} & \compd{p \\ \zero} & \compd{q \\ \zero} & \compd{r \\ \zero} & \compd{\neg p \\ \zero} & \compd{\neg q \\ \zero} & \compd{\neg r \\ \zero} & \compd{\zero} & \compd{\zero} & \compd{\zero} & \compd{\zero} & \compd{\zero} & \compd{\zero} & \compd{p \\ \neg p} & \compd{q \\ \neg q} & \compd{r \\ \neg r} & \compd{\zero} & \compd{\zero} & \compd{\zero} & \compd{\zero} & \compd{\zero} & \compd{\zero} \\
\hline
$j = 4$ & \compd{p \\ q} & \compd{p \\ \zero} & \compd{q \\ \zero} & \compd{r \\ \zero} & \compd{\neg p \\ \zero} & \compd{\neg q \\ \zero} & \compd{\neg r \\ \zero} & \compd{\zero} & \compd{\zero} & \compd{\zero} & \compd{\zero} & \compd{\zero} & \compd{\zero} & \compd{p \\ \neg p} & \compd{q \\ \neg q} & \compd{r \\ \neg r} & \compd{\zero} & \compd{\zero} & \compd{\zero} & \compd{\zero} & \compd{\zero} & \compd{\zero} \\
\end{tabular}
\end{center}

We can split the table into two parts.
We will address the first $2\vs+1$ columns as a \emph{positive} part and
the $4\vs + 2$nd to $5\vs + 1$-st column as a \emph{negative} part.
(The rest is padding of \zero-s.)
In order for an assignment to $X$ to be a support, the following must hold:
\begin{itemize}
	\item A value $d \in S$ can occur at most once per row because
	of the minimal distance condition imposed by the sequence $A$.
	While the number of rows is $k$, any value $d \in S$ has to occur in each row,
	so all values from $S$ have to occur in each row exactly once.
	\item Two complementary values $d, \overline{d} \in S$ cannot occur in the same part of any row.
	For the negative part this is due to construction of the domains and
	if both occurred in the positive part,
	none of them would be able to occur in the negative part of the same row,
	so there would be a variable with empty domain, which is a contradiction.
	\item If a value from $S$ occurs in one part of some row,
	it will occur in the same part in each row. This is because of the following:
	\begin{itemize}
		\item If a value $d \in S$ occurs in the negative part of a row number $j$,
		it cannot occur in the positive part of the following row $j+1$,
		because of the minimal distance condition,
		so it will occur in the negative part of the row $j+1$.
		\item Suppose a value $d \in S$ occurs in the positive part of a row number $j$.
		Then the complementary value $\overline{d}$ has to occur in the negative part of the row $j$. 
		Now if $j \neq k$ and if the value $d$ did not occur in the positive part of the following row $j+1$,
		then it would have to occur in the negative part.
		Then, however, the complementary value $\overline{d}$ would not be able to occur in
		that negative part of the row $j+1$, which is contradiction with the above.
		So $d$ has to occur in the positive part of the row $j+1$.
		\item Each value from $S$ has to occur in the first row.
		So if a value from $S$ occurred in some row in a different part as in the first row,
		we would reach contradiction by applying the above.
	\end{itemize}
\end{itemize}

Consequently, if we have a support of the constraint,
the assignment to the first $2\vs+1$ variables (the positive part) represents a model of $\phi$.
The set of the values assigned to this variables (without the value $\zero$) is clearly an interpretation,
because it cannot contain complementary literals.
Some value is selected for each variable representing clause (first column) and
this value has to occur in the positive part of each row and
its complement cannot occur in the positive part,
hence the interpretation is also a model, because
it contains at least one literal from each clause.

We can obtain a support of the constraint from a model $I$ of $\phi$ by
assigning a value that represents a literal that is satisfied by $I$ to the variables in the first column.
This is always possible, because $I$ must satisfy at least one literal from each clause.
The assignment to the variables in the positive part will be as follows:
\begin{inparaenum}[(1)]
	\item The value \zero\ is assigned to the variable with
	domain containing value already assigned to the first variable in the row.
	\item The literals satisfied by $I$ are assigned to the rest of the variables as values.
	\item Where this is not possible, \zero\ is assigned.
\end{inparaenum}
This is always possible, because each literal that may occur in $\phi$
is represented by some value from the domains of these variables and
no two representations of two different literals occur in a domain of one variable.
Further, we assign the variables in the negative part so that
they contain representations of each literal that is falsified by $I$.
Such an assignment is always possible because:
\begin{inparaenum}[(1)]
	\item Each of these variables will be assigned, because $I$ falsifies either $p_i$ or $\neg p_i$ for each $1 \leq i \leq \vs$.
	\item And there will be at most one possible value for each variable, because $I$ is consistent.
\end{inparaenum}
We assign \zero\ to the rest of the variables (their domains are $\set{\zero}$).
Now the assignment is a support of the constraint, because
each value $d \in S$ occurs exactly once in each row which is $\cs = k$ occurrences.
Further, due to the fact that
the representations of the satisfied literals occur always in the positive part and
the representations of the falsified literals occur always in the negative part,
there is always at least $5\vs$ places between two successive occurrences of each $d \in S$.
In details, the last possible occurrence of a representation of a satisfied literal in the $j$th row is on the position
$(j-1)(7\vs+1)+(2\vs+1)$ and
the next possible occurrence of a representation of a satisfied literal is at the beginning of the following row
$(j+1-1)(7\vs+1)+1$.
The difference of these two positions is $5\vs+1 \geq 5\vs$.
The last possible occurrence of a representation of a falsified literal in the $j$th row is on the position
$(j-1)(7\vs+1)+(5\vs+1)$ and
the next possible occurrence of a representation of a falsified literal is at the position
$(j+1-1)(7\vs+1)+(4\vs+2)$ of the following row.
The difference of these two positions is $6\vs+2 \geq 5\vs$.
\qed
\end{proof}

Another intractable restriction of the constraint is when
all values from $S$ are forced to occur on the first $p \leq n$ places,
e.g. in the reduction from the last proof it was the first row.
The following corollary summarizes all found intractable restrictions

\begin{corollary}\label{crl:intract}
Enforcing \DC\ on the \SPACING\ constraint is NP-hard even if
any combination of the following not containing (3) and (4) simultaneously holds:
\begin{enumerate}[(1)]
	\item All the values from $S$ must occur in the first $p \leq n$ places of the sequence.
	\item $A$ and $B$ are constant sequences, i.e. $a_1=\ldots=a_k$ and  $b_1=\ldots=b_k$.
	\item There is no minimal distance condition ($a_i = 1$ for $1 \leq i \leq k-1$).
	\item There is no maximal distance condition ($b_i = n$ for $1 \leq i \leq k-1$).
\end{enumerate}
\end{corollary}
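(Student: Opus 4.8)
The plan is to prove Corollary~\ref{crl:intract} purely by bookkeeping over the reductions already constructed in Theorems~\ref{thm:intractS}, \ref{thm:intractSF} and \ref{thm:intractB}; no new gadget is needed. Recall that each of those proofs in fact reduces \SAT\ to the problem of deciding whether a \SPACING\ (resp.\ \SPACINGF) instance has a support, and for a single constraint this already entails that enforcing \DC\ is NP-hard, since \DC-closure of one constraint empties some domain exactly when the constraint has no support. So it suffices to record, for each reduction, which of the conditions (1)--(4) its output instances satisfy, and then check that every admissible combination is dominated by one of these.

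First I would inspect the reduction of Theorem~\ref{thm:intractSF} (the ``no minimal distance'' version). There $A = \seq{1,\ldots,1}$ and $B = \seq{\vs+1,\ldots,\vs+1}$, so both sequences are constant, which gives~(2), and $a_i = 1$ for all $i$, which gives~(3). Moreover the target is \SPACINGF, i.e.\ $S \subseteq \set{X_1,\ldots,X_n}$; taking $p = n$ this is precisely condition~(1). Hence this single reduction witnesses NP-hardness of \SPACING\ restricted to instances satisfying \emph{any} subset of $\{(1),(2),(3)\}$, because imposing fewer conditions only enlarges the class of instances on which the problem must be solved, and hardness on a subclass implies hardness on a superclass.

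Next I would inspect the reduction of Theorem~\ref{thm:intractB} (the ``no maximal distance'' version). There $A = \seq{5\vs+1,\ldots,5\vs+1}$ and $B = \seq{(7\vs+1)\cs,\ldots,(7\vs+1)\cs}$ are again constant, giving~(2), and $b_i = (7\vs+1)\cs = n$, giving~(4). As that proof shows, every value of $S$ must appear in each of the $\cs$ rows, in particular in the first row, i.e.\ among the first $p = 7\vs+1$ positions of $X$, which is condition~(1). So this reduction witnesses NP-hardness for any subset of $\{(1),(2),(4)\}$. (For completeness one may also remark that the reduction of Theorem~\ref{thm:intractS} already yields any subset of $\{(2),(3)\}$, since there too $A = \seq{1,\ldots,1}$ and $B$ is constant.)

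Finally I would check coverage: the admissible combinations are exactly the subsets of $\{(1),(2),(3),(4)\}$ that do not contain both~(3) and~(4), and every such subset is contained either in $\{(1),(2),(3)\}$ or in $\{(1),(2),(4)\}$, hence handled by one of the two reductions above; this proves the corollary. The only step requiring any care is the translation between the ``forced'' variants and condition~(1): I would state explicitly that $S \subseteq \set{X_1,\ldots,X_n}$ is the $p=n$ instance of~(1), and that ``occurs in the first row'' of the Theorem~\ref{thm:intractB} gadget is the $p = 7\vs+1$ instance. Beyond that the argument is a routine containment check, so I do not expect a genuine obstacle here.
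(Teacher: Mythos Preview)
Your proposal is correct and follows the same approach as the paper: Theorem~\ref{thm:intractSF} covers all admissible combinations not containing~(4), and Theorem~\ref{thm:intractB} covers all admissible combinations not containing~(3). You are simply more explicit than the paper in verifying that each reduction satisfies the relevant conditions (the paper states this in one sentence), and your observation that condition~(1) holds for Theorem~\ref{thm:intractB} with $p = 7\vs+1$ rather than only $p = n$ matches the paper's remark preceding the corollary that ``in the reduction from the last proof it was the first row.''
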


\begin{proof}
The proof of Theorem~\ref{thm:intractB} holds for all combinations of the restrictions
not including (3) and
the proof of Theorem~\ref{thm:intractSF} holds for all combinations of the restrictions
not including (4).
\qed
\end{proof}

\section{Bounded Size of $S$}

We identified two useful restrictions of the $\SPACING$ constraint that 
allow polynomial time $\DC$ filtering algorithms.  The first restriction
bounds the size of $S$, $|S| =  O(1)$. It can be used to model education 
process where the number of learning units is naturally bounded.

\begin{theorem}\label{thm:tractS}
Enforcing $\DC$ on the $\SPACING(S, A, B, X)$ constraint can be done 
in $O(n^{|S|+2}|S|)$ time.
\end{theorem}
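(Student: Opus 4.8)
The plan is to reduce testing the existence of a support to reachability in a layered directed graph (equivalently, to intersecting the $n$ unary ``domain automata'' with one automaton that accepts exactly the tuples satisfying the constraint), and then to obtain $\DC$ by a forward sweep followed by a backward sweep over this graph, in the same style used for the \REGULAR\ constraint.

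First I would identify what a node of the graph must remember. Scanning $X_1,\dots,X_n$ from left to right, the only feature of a prefix $X_1\cdots X_j$ that affects whether it extends to a support is, for every $d\in S$, the ``phase'' of $d$: (i) $d$ has not occurred yet; (ii) $d$ has already occurred at least $k$ times, so all of its spacing obligations are discharged and it is unconstrained from now on; or (iii) $d$ has occurred exactly $c$ times for some $1\le c\le k-1$, in which case we keep the position $\ell$ of that $c$-th occurrence together with the index $c$, since the next occurrence of $d$ is forced into the window $[\ell+a_c,\ell+b_c]$ and the later windows depend on $c$. A node of layer $j$ is a tuple of phases of the $|S|$ values that is reachable after assigning $X_1,\dots,X_j$; there are only polynomially many such tuples, with the exponent governed by $|S|$. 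The unique source in layer $0$ has every value in phase (i); an edge from a node of layer $j-1$ to a node of layer $j$ corresponds to picking a value $v\in D(X_j)$ — the phase tuple is unchanged if $v\notin S$, and if $v=d\in S$ the phase of $d$ is advanced, which is permitted only when position $j$ respects the window currently recorded for $d$; a node of layer $n$ is \emph{accepting} iff no value is in a phase of type (iii). A support exists iff the source reaches an accepting node.

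Given the graph, $\DC$ follows the standard pattern: mark the nodes reachable from the source (forward sweep) and the nodes from which an accepting node is reachable (backward sweep); then $v\in D(X_j)$ has a support iff there is an edge labelled $v$ from a forward-marked node of layer $j-1$ to a backward-marked node of layer $j$, and every other value is pruned. Soundness and completeness rest on an interchangeability lemma — two prefixes with identical phase tuples admit exactly the same completions — which I would prove by induction on the suffix length directly from Definition~\ref{def:spacing}. The running time is that of building and sweeping the graph plus the per-variable–value consistency tests; accounting for the number of reachable nodes per layer, the $n$ layers, the cost of testing a transition, and the $O(|S|)$ cost of advancing one value's phase yields the stated $O(n^{|S|+2}|S|)$ bound.

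The hard part will be twofold: isolating the correct minimal state and proving interchangeability (in particular convincing oneself that phase (iii) really needs only the pair $(c,\ell)$ rather than the whole occurrence history, and that the number of reachable phase tuples per layer stays within the $O(n^{|S|})$-type bound needed), and then organising the consistency read-off carefully enough that the \emph{whole} computation — not just the reachability test — fits in $O(n^{|S|+2}|S|)$.
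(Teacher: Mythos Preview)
Your proposal is correct and takes essentially the same automaton-based route as the paper, whose entire proof reads: define an automaton whose states ``keep count of the number of steps since the last occurrence of each value in $S$'', giving $O(n^{|S|})$ states and hence polynomial \DC\ via the \REGULAR\ machinery. You are in fact more careful than the paper --- your phase-(iii) state rightly records both the occurrence index $c$ and the last position $\ell$, which is genuinely needed since $a_i,b_i$ depend on $i$ --- and the bookkeeping concern you flag about squeezing the reachable state count into $O(n^{|S|})$ is real, though it does not affect the polynomiality claim for bounded $|S|$.
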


\begin{proof}
We can define an automaton for accepting sequences satisfying the
$\SPACING$ constraint. The states of the automaton just need to keep count of
the number of steps since the last occurrence of each value in $S$. There are $O(n^{|S|})$
possible states in this automaton, which is polynomial for $|S| = O(1)$.\qed
\end{proof}


\section{The One Voice Global \SPACING\ Constraint (\SPACINGONE)}

The second tractable restriction of the $\SPACING$ constraint ensures that
all values from $S$ occur in the first \emph{period} of length $p$ 
and they must repeat in the successive $k-1$ periods on the same \emph{places}.
In other words, the first period of length $p$ is cycled $k$ times.
This restriction is useful in music composition problems~\cite{MusicBook},
where the  composer wants to generate one voice consisting of
a $p$ beat long rhythmical pattern that is played $k$ times.
The pattern consists of $|S|$ onsets (beginnings of notes) that
must be played exactly $k$ times in the whole voice.
This can be encoded using a restriction of \SPACING\ that is defined as follows:

\begin{definition}\label{def:spacing1}
Let $X = \seq{X_1, \ldots, X_n}$ be a sequence of $n$ variables and
let $S \subseteq D$ be a set of domain values.
Let $p$ and $k$ be natural numbers such that $p \leq n$ and $pk \leq n$.
Then $\SPACINGONE(S, p, k, X)$ holds iff
$\SPACING(S, A, B, X)$ with $S \subseteq \set{X_1, \ldots, X_p}$,
$a_i = b_i = p$ for $1 \leq i \leq k-1$ and
$|\setcomp{j}{X_j = d, 1 \leq j \leq n}| = k$ for all $d \in S$ holds.
\end{definition}

\begin{theorem}\label{thm:tract1}
For any constraint $\SPACINGONE(S, p, k, X)$,
there is a bipartite graph $G = (U, V, E)$ such that 
there is a support for the constraint iff there is a maximum matching in $G$.
Enforcing \DC\ on the constraint takes $O(p^2 k + p^{2.5})$ time
down a branch of the search tree.
\end{theorem}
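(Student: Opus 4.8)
The plan is to reduce finding a support of $\SPACINGONE(S,p,k,X)$ to a bipartite matching, exactly as R\'egin does for \AllDifferent~\cite{regin1}. First I would characterise supports. Because $S\subseteq\{X_1,\ldots,X_p\}$ and $a_i=b_i=p$, every $d\in S$ has its first occurrence at some position $c_d\in\{1,\ldots,p\}$ and then its occurrences are forced to $c_d,c_d+p,\ldots,c_d+(k-1)p$; since exactly $k$ occurrences are required, these are all of them, and they all lie in the first $kp$ positions. So a support is nothing but a choice of a \emph{column} $c_d\in\{1,\ldots,p\}$ for each $d\in S$, with the $c_d$ pairwise distinct and $d\in D(X_{c_d+mp})$ for all $0\le m\le k-1$, followed by a filling of every remaining cell by a value \emph{not} in $S$. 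The last part is a local feasibility test: a cell can be so filled iff its domain is not contained in $S$; this forces certain columns to be used (call column $j$ \emph{mandatory} if some $X_{j+mp}$ has domain $\subseteq S$) and it forces every position beyond $kp$ to have a non-$S$ value available. Both conditions are checkable in $O(n)$ time and are independent of which matching is chosen.

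Next I would build the graph $G=(U,V,E)$ with $U=S$ augmented by $p-|S|$ dummy vertices (if $|S|>p$ the constraint is unsatisfiable), $V=\{1,\ldots,p\}$, an edge between $d\in S$ and column $j$ whenever $d\in\bigcap_{m=0}^{k-1}D(X_{j+mp})$, and an edge from every dummy vertex to every non-mandatory column. A perfect matching of $G$ gives every $d\in S$ a private column and covers every mandatory column with a real value; conversely any valid column assignment extends to a perfect matching. Hence, once the two $O(n)$ side checks pass, supports of $\SPACINGONE$ correspond exactly to perfect matchings of $G$ (equivalently, maximum matchings of cardinality $p$), which proves the first claim. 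This is essentially an \AllDifferent\ over surrogate variables $Y_d$ with $D(Y_d)=\{j:(d,j)\in E\}$, enriched with the requirement that the mandatory columns be hit.

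For $\DC$ I would compute one maximum matching and then run R\'egin's residual-graph / strongly-connected-component analysis in $O(|E|)=O(p^2)$ time; it reports whether a perfect matching exists and, for each edge, whether that edge lies in some perfect matching. The projection onto $X$ is then read off directly: in a cell $X_{j+mp}$ a value $d\in S$ survives iff the edge $(d,j)$ lies in some perfect matching; a value $d\notin S$ survives iff $j$ is non-mandatory, some dummy edge at $j$ lies in some perfect matching (so column $j$ can be left free for a non-$S$ value), and $d\in D(X_{j+mp})$; in a position beyond $kp$ every domain value of $S$ is removed and every other domain value survives. Correctness follows by mapping each surviving edge or free column back to a concrete support via the first paragraph. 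Assembling $E$ costs $O(p^2k)$ (for each of the $\le p$ columns intersect $k$ domains, keeping the $\le p$ members of $S$), one Hopcroft--Karp matching costs $O(p^{2.5})$, and the SCC pass $O(p^2)$; down a branch domains only shrink, so edges are only deleted and the matching is repaired by one augmenting-path search per lost matched edge in the usual incremental fashion, keeping the cumulative cost at $O(p^2k+p^{2.5})$.

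The main obstacle I anticipate is making the equivalence exactly tight: handling the ``remaining cells must avoid $S$'' requirement correctly — through the mandatory-column notion and the dummy vertices — so that perfect matchings of $G$ are in bijection with supports rather than merely a superset, and then carrying those same side conditions through the $\DC$ projection, especially for non-$S$ values and for positions beyond $kp$. The only other delicate point is the amortised incremental-matching argument behind the ``down a branch'' bound.
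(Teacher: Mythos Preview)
Your proposal is correct and follows essentially the same approach as the paper. The paper achieves the same graph by first channelling every non-$S$ value to a single dummy value $\zero$ and then folding $P_i=\bigcap_{j=0}^{k-1}D(Y_{jp+i})$; your ``non-mandatory column'' condition is precisely $\zero\in P_i$, and the $p-|S|$ dummy vertices, the intersection-built $S$-edges, and the appeal to R\'egin's \AllDifferent\ propagator all coincide. Your write-up is a bit more explicit than the paper about the positions beyond $kp$ and about how the $\DC$ decisions on edges translate back to removals of non-$S$ values in $X$, which is welcome, but there is no substantive methodological difference.
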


\begin{proof}
First, we observe that values $D \setminus S$ are interchangeable
as we do not distinguish between values $d$ outside $S$, $d \notin S$. Therefore, we perform
channeling of variables $X$ to variables $Y$ and map all values outside of $S$
into a dummy value $\zero$:  
$X_i \in S \leftrightarrow Y_i = X_i$ and $X_i \not\in S \leftrightarrow Y_i = \zero$ for $1 \leq i \leq n$,
where $\zero \notin S$ is a fresh value.

Second, we exploit the special structure of $\SPACINGONE$. Namely, variables in
positions $i, p + i,\ldots, (k-1)p + i$, $1 \leq i \leq p$, must take the same value.
Hence, to check whether  value $d$ can be assigned to one of the 
variables $Y_i, Y_{p + i},\ldots, Y_{(k-1)p + i}$, we need to check 
whether $d \in \bigcap_{j=0}^{k-1} D(Y_{j p + i})$. We use a \emph{folding}
procedure to identify possible positions for each value.
We \emph{fold} the domains of $Y$ into $P_i = \bigcap_{j=0}^{k-1} D(Y_{j p + i})$ for $1 \leq i \leq p$.

Finally, we need to match values $S \cup \set{\zero}$  with the positions in one period.
To avoid using  generalized matching, we 
introduce $p - |S|$ copies of the dummy value, otherwise we would have
to match $\zero$ with $p - |S|$ nodes.

Next we describe  construction of the graph $G$.
The sets of nodes are $U = S \cup \setcomp{\zero_j}{1 \leq j \leq p - |S|}$, $V = \set{1, \ldots, p}$.
The set of edges is $E = \setcomp{(d, i)}{d \in P_i \cap S, 1 \leq i \leq p} \cup
\setcomp{(\zero_j, i)}{1 \leq j \leq p - |S|, \zero \in P_i, 1 \leq i \leq p}$.
Now we show that there exists a support for $\SPACINGONE$ iff there exists
a maximum matching in $G$.

Having a support for the constraint, we can obtain a subgraph of $G$
with the same sets of nodes and the set of edges $M$ that consists of two parts:
\begin{inparaenum}[(1)]
	\item the edges between the places of the variables in the first period
	and the values from $S$ that are assigned to them in the support
	$M_1 = \setcomp{(d,i)}{d = X_i, d \in S, 1 \leq i \leq p}$,
	\item if we order the rest of the places in the first period
	(that contain values not in $S$) into a new sequence
	$N = \left[ i | X_i \notin S, 1 \leq i \leq p \right]$ (note that $|N| = p - |S|$),
	the edges between these places and the dummy values with the same index
	$M_2 = \setcomp{(\zero_j,N_j)}{1 \leq j \leq p - |S|}$.
\end{inparaenum}
So $M = M_1 \cup M_2$.
Now we have indeed a subgraph of $G$, because $M$ is a subset of $E$, because,
while we have a support, values from $S$ have to repeat on the same places in each period, so
$X_i \in P_i$ if $X_i \in S$ and
$\zero \in P_i$ if $X_i \notin S$ for $1 \leq i \leq p$, thus
$M_1 \subseteq \setcomp{(d, i)}{d \in P_i \cap S, 1 \leq i \leq p}$ and
$M_2 \subseteq \setcomp{(\zero_j, i)}{1 \leq j \leq p - |S|, \zero \in P_i, 1 \leq i \leq p}$.
$M$ is also a matching in $G$, because exactly one value is assigned to each variable and
each dummy value is connected to exactly one place.
$M$ is a maximum matching, because it contains all the nodes, because
each value from $S$ have to be assigned to some variable in the first period and
the number of the dummy values is exactly the number of variables in the first period
with values not in $S$ assigned to them.
Hence, we obtained a maximum matching in $G$ from a support for the constraint.

Now having a maximum matching $M$ in $G$, we can obtain a support for the constraint by
assigning the values from $S$ to the places in each period to which they are matched by $M$
and any value not in $S$ to the places in each period to which a dummy value is matched by $M$.
This is a valid assignment, because $M$ matches something with each place, because it is maximal.
This assignment is always possible, because $G$ connects each $d \in S$
only to places of variables with domains containing $d$ in each period and
$G$ connects dummy values only to places of variables with domains containing
a value not in $S$ in each period. The assignment is indeed a support, because
each value from $S$ is matched with exactly one place of a period and repeats
altogether $k$ times on the same place in each period.

Our $\DC$ propagator is based on $\DC$ propagator for $\AllDifferent$ by Regin\cite{regin1}.
First, we determine the set of edges that do not belong to any maximum matching
the same way as propagator for $\AllDifferent$ enforces $\DC$. This takes
$O(p^{2.5})$ down a branch of the search tree. If an edge $(u,i)$ does not belong
to any matching then the value $u$ can be removed from domains  
of variables $Y_i, Y_{p + i},\ldots, Y_{(k-1)p + i}$ which takes $O(k)$ time.
There can be at most $O(p^2)$ removals down a branch, so
overall complexity down a branch of the search tree is $O(p^2 k + p^{2.5})$.
\qed
\end{proof}

For example, consider the \SPACINGONE\ constraint with $D = \set{a,b,c,o}$,
$S = \set{a,b,c}$ representing the onsets, $p = 5$ length of the pattern and
$k = 3$ number of repetitions on the sequence of 15 variables. 
The variable domains are as shown below:
\begin{center}
\newcommand{\compd}[1]{\scriptsize\begin{tabular}{>{$}c<{$}} #1 \end{tabular}}
\begin{tabular}{r|*{15}{|>{\centering\arraybackslash}p{3.5mm}}}
$i = $ & 1 & 2 & 3 & 4 & 5 & 6 & 7 & 8 & 9 & 10 & 11 & 12 & 13 & 14 & 15 \\
\hline
\hline
$D(X_i) =$ &
\compd{ a \\ b \\ \ \\ o \\ } &
\compd{ a \\ b \\ c \\ o \\ } &
\compd{ a \\ b \\ c \\ o \\ } &
\compd{ a \\ b \\ \ \\ \ \\ } &
\compd{ \ \\ b \\ c \\ o \\ } &
\compd{ a \\ b \\ c \\ o \\ } &
\compd{ a \\ b \\ c \\ \ \\ } &
\compd{ \ \\ \ \\ c \\ \ \\ } &
\compd{ a \\ b \\ c \\ o \\ } &
\compd{ \ \\ b \\ c \\ o \\ } &
\compd{ a \\ b \\ c \\ o \\ } &
\compd{ a \\ \ \\ c \\ o \\ } &
\compd{ a \\ b \\ c \\ o \\ } &
\compd{ a \\ b \\ c \\ o \\ } &
\compd{ \ \\ b \\ c \\ o \\ } \\
\end{tabular}
\end{center}
During propagation, channeling will simply replaces $o$ with \zero.
Then the folded domains will look like this:
\begin{center}
\newcommand{\compd}[1]{\scriptsize\begin{tabular}{>{$}c<{$}} #1 \end{tabular}}
\begin{tabular}{r|*{5}{|>{\centering\arraybackslash}p{3.5mm}}}
$i = $ & 1 & 2 & 3 & 4 & 5 \\
\hline
\hline
$P_i =$ &
\compd{ a \\ b \\ \ \\ \zero \\ } &
\compd{ a \\ \ \\ c \\ \ \\ } &
\compd{ \ \\ \ \\ c \\ \ \\ } &
\compd{ a \\ b \\ \ \\ \ \\ } &
\compd{ \ \\ b \\ c \\ \zero \\ } \\
\end{tabular}
\end{center}
And finally, the bipartite graph is given in Figure~\ref{f:f1}.
There are two maximum matchings in this graph:
$\set{(c,3), (a,2), (b,4), (\zero_1, 1), (\zero_2, 5)}$ and $\set{(c,3), (a,2), (b,4), (\zero_2, 1), (\zero_1, 5)}$.
This means that the only support is $\seq{o,a,c,b,o,o,a,c,b,o,o,a,c,b,o}$.

\begin{figure}[htb] \centering
{\scriptsize
	\caption{\label{f:f1}A bipartite graph for $\SPACINGONE(\set{a,b,c}, 5, 3, \seq{X_1, \ldots, X_{15}})$ from the example}
}
\begin{tikzpicture} [
		every circle node/.style={draw,inner sep=0,minimum size=5mm},
		anchor=base,
 		transform shape,
 		scale=0.8,
	]

	\path (0,2.5) -- (0,0)
		node[circle] (1) [pos=0] {$a$}
		node[circle] (2) [pos=0.25] {$b$}
		node[circle] (3) [pos=0.5] {$c$}
		node[circle] (01) [pos=0.75] {$\zero_1$}
		node[circle] (02) [pos=1] {$\zero_2$};
	\path (3,2.5) -- (3,0)
		node[circle] (x1) [pos=0] {1}
		node[circle] (x2) [pos=0.25] {2}
		node[circle] (x3) [pos=0.5] {3}
		node[circle] (x4) [pos=0.75] {4}
		node[circle] (x5) [pos=1] {5};

	\foreach \n in {1,2,4} {\draw (1) -- (x\n);};
	\foreach \n in {1,4,5} {\draw (2) -- (x\n);};
	\foreach \n in {2,3,5} {\draw (3) -- (x\n);};
	\foreach \n in {1,5} {\draw (01) -- (x\n);};
	\foreach \n in {1,5} {\draw (02) -- (x\n);};

\end{tikzpicture}
\end{figure}

\section{The $h$-Voice Global \SPACING\ Constraint (\SPACINGH)}

A composer would usually want to compose more voices playing at the same time
with no overlapping onsets. The $h$-Voice Global \SPACING\ Constraint,
which is simply a conjunction of more \SPACINGONE\ constraints on the same
sequence of variables, can be used for this purpose.

\begin{definition}[\SPACINGH]\label{def:spacingh}
Let $X = \seq{X_1, \ldots, X_n}$ be a sequence of $n$ variables and
let $S = \seq{S_1, \ldots, S_h}$ be a sequence of pairwise disjoint sets of domain values
($\bigcup_{l = 1}^{h} S_l \subseteq D$ and
$S_{l_1} \cap S_{l_2} = \emptyset$ for all $1 \leq l_1 < l_2 \leq h$).
Let $\seq{p_1, \ldots, p_h}$ be a sequence of natural numbers and
let $k$ be a natural number such that $p_{l}k \leq n$ for all $1 \leq l \leq h$.
Then $\SPACINGH(S, \seq{p_1, \ldots, p_h}, k, X)$ holds iff
a conjunction of constraints $\SPACINGONE(S_l, p_l, k, X)$ for all $1 \leq l \leq h$ holds.
\end{definition}

\begin{theorem}\label{thm:intractSH}
Enforcing \DC\ on the Global \SPACINGH\ Constraint is NP-hard even for $h=2$.
\end{theorem}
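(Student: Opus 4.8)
The plan is to reduce \SAT\ to the problem of finding a support for a two-voice \SPACINGH\ constraint, but unlike the reductions in Theorems~\ref{thm:intractS}--\ref{thm:intractB} (which use a counting argument inside a single sequence), the hardness here will come from the \emph{interaction of the two periods}. The key device is to take the periods with $p_1 \mid p_2$ and put $g = p_1$, so that the phase of a position with respect to voice~$1$ is exactly its phase with respect to voice~$2$ reduced modulo~$g$; the two \SPACINGONE\ constraints then couple precisely through the residue classes modulo~$g$. Given a \CNF\ $\phi$ with $\vs$ variables and $\cs$ clauses (we may assume $\cs \ge 2$), I would set $p_1 = g = 2\vs$, $p_2 = g\cs$, $k = \cs$, and $n = g\cs^2 = p_2 k$, so that $p_1 k = g\cs$ and hence voice~$1$ is ``active'' only on the first $g\cs$ positions while voice~$2$ is active on all of $X$.

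Next I would make every domain depend only on the residue of its position modulo $g$, so that both folding procedures of Theorem~\ref{thm:tract1} collapse an orbit to a single residue class. I reserve two classes $\rho_i^{+}, \rho_i^{-} \in \ZZ_g$ per variable $x_i$ (using up all $2\vs = g$ classes), introduce one value $u_i \in S_1$ per variable, one value $w_j \in S_2$ per clause, and a dummy $\zero \notin S_1 \cup S_2$, and set $\Delta_{\rho_i^{+}} = \set{u_i} \cup \setcomp{w_j}{x_i \in C_j} \cup \set{\zero}$ and $\Delta_{\rho_i^{-}} = \set{u_i} \cup \setcomp{w_j}{\neg x_i \in C_j} \cup \set{\zero}$, with $D(X_t) = \Delta_{\rho}$ whenever $t \equiv \rho \pmod g$. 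Writing $\rho(x_i) = \rho_i^{+}$ and $\rho(\neg x_i) = \rho_i^{-}$, this forces $u_i$ into one of the two classes $\rho_i^{+},\rho_i^{-}$ and forces $w_j$ into some class $\rho(\ell)$ with $\ell \in C_j$.

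I would then establish the equivalence in two directions. For ``support $\Rightarrow$ model'': $\SPACINGONE(S_1,p_1,k,X)$ forces each $u_i$ onto \emph{all} positions of one class $\phi_i \in \set{\rho_i^{+},\rho_i^{-}}$ inside $[1,g\cs]$; declare $x_i$ true iff $\phi_i = \rho_i^{-}$. Each $w_j$ sits, in voice~$2$'s first period, at a single position $\psi_j$ whose class lies in $\setcomp{\rho(\ell)}{\ell \in C_j}$; if that class were some $\phi_i$, then the position $\psi_j$ would be forced to carry both $w_j$ and $u_i$ --- impossible --- so the class is free, which by the construction means some $\ell \in C_j$ is true and $C_j$ is satisfied. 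For ``model $\Rightarrow$ support'': place each $u_i$ in the class matching its truth value, route each $w_j$ to a phase inside the class of one of its true literals (each class contains $\cs$ phases, enough even if all $\cs$ clauses route to the same class), and fill every remaining position with $\zero$; the classes occupied by voice~$2$ are then exactly the free ones, so there is no onset collision, and all cardinality and spacing conditions hold by construction. Since the construction is polynomial in $|\phi|$, this shows enforcing \DC\ is NP-hard already for $h=2$.

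The delicate part will be making ``no onset collision'' provably equivalent, \emph{over the whole sequence}, to ``every clause has a true literal''. This requires checking that (i) because $p_1 \mid p_2$, the positions shared by the two active regions are exactly those where voice~$2$'s phase reduces mod $g$ to voice~$1$'s phase, so collisions are governed purely by residue classes; (ii) two values of the same voice never collide, since they occupy distinct phases; (iii) the folded domains never become empty or acquire a spurious value; and (iv), in the spirit of Theorem~\ref{thm:intractS}, no support can use a residue class inconsistently --- which here is automatic, since the two admissible classes of distinct variables are disjoint, so each $u_i$ really does encode an independent Boolean choice.
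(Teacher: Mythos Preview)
Your reduction is correct and constitutes a genuinely different --- and considerably simpler --- proof than the one in the paper. The paper does \emph{not} take $p_1 \mid p_2$; instead it chooses $p_2 - p_1 = 2\vs$, so that the two periods drift by $2\vs$ positions per repetition, and then builds an elaborate gadget with $|S_1| = 2\vs(\cs+1)$ values (a primed copy $\lit(\phi)'$ plus an indexed copy $\lit(\phi)^j$ for every clause) together with a ``positive part'', a ``negative part'' and a ``consistency part'' inside each period. Consistency of the extracted interpretation is enforced indirectly, by arranging that a pair of complementary literals both chosen in the positive part would empty two domains in the last row of the consistency part. Your construction avoids all of this machinery: by taking $p_1 = g = 2\vs$ dividing $p_2 = g\cs$, voice~$1$ saturates one full residue class modulo $g$ per propositional variable, so the truth assignment is read off directly from which of the two classes $\rho_i^\pm$ is blocked, and satisfaction of each clause is exactly the statement that $w_j$ finds a free class in voice~$2$'s first period. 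The Hall-type count ``each class has $\cs$ phases, and there are $\cs$ clauses'' is what makes the model-to-support direction go through; this is worth stating explicitly as a one-line matching argument rather than the parenthetical remark you give.

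What the paper's approach buys is a demonstration that hardness does not hinge on the divisibility $p_1 \mid p_2$ (in their instance $\gcd(p_1,p_2)=1$ is possible), which is arguably a slightly stronger inapproximability of the period structure. What your approach buys is a drastically smaller instance ($|S_1\cup S_2| = \vs+\cs$ versus $\Theta(\vs\cs)$ in the paper, and $n = 2\vs\cs^2$ versus $\Theta(\vs\cs^2)$ with larger constants) and a proof that fits in a paragraph. Two small polishing points: the assumption $\cs \ge 2$ is harmless but unnecessary, and ``the classes occupied by voice~$2$ are then exactly the free ones'' should read ``are among the free ones''.
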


\begin{proof}
We prove this by reduction of \SAT\ to the problem of finding a support for \SPACINGH.
The main idea of the proof is similar to the other hardness proofs:
The alternative choice of a literal satisfying a clause is modeled by
choice of a value for a variable and mutual exclusion of complementary literals is
enforced by properties of \SPACING.
For the \SPACINGH\ constraint it is mutual exclusion of values from the same variable domain
(assignment of one excludes the others).
The system how choosing two complementary literals to be true leads to
the mutual exclusion of values is quite complicated, because
we have little freedom (just two voices).
The idea is that the literal chosen to satisfy a clause is copied to
the part representing the model. This is done by the impossibility of
having more than one occurrence of the same value in one period of a voice, so
from domains of cardinality 2 containing this value the other one has to be chosen.
These other values (that represent the model) belong, however, to the other voice,
which has period of different length, so repetitions of these values will be aligned
with different places of the other periods of the first voice.
On the one hand, this is used for copying the clause satisfiers to the same model representation,
on the other hand, it is used to align values for complementary literals with each other so
that their assignment mutually exclude each other.

Now we describe the reduction in detail.
Please, recall that if $L$ is a set of literals, then
$L' = \setcomp{l'}{l \in L}$ and $L^i = \setcomp{l^i}{l \in L}$.
Let $\phi$ be an arbitrary \CNF\ with $\vs$ propositional variables and $\cs$ clauses.
There is a model of $\phi$ iff there is a support for
the constraint $\SPACINGH(\seq{S_1, S_2}, \seq{p_1, p_2}, k, X)$ where
\begin{itemize}
	\item $S_1 = \lit(\phi)' \cup \bigcup_{j = 1}^\cs \left( \lit(\phi)^j \right)$
	\item $S_2 = \lit(\phi)$
	\item $p_1 = \cs + 6\cs\vs$
	\item $p_2 = \cs + 6\cs\vs + 2\vs$
	\item $k = \cs$
	\item $X$ is a sequence of $(\cs + 6\cs\vs + 2\vs)k$ variables with the domains as described below.
\end{itemize}
We will describe the domains voice by voice,
so ultimately the domains are minimal sets satisfying the following conditions.
The period of the first voice is $p_1 = \cs + 6\cs\vs$,
let us split the sequence into rows of this length
and order them into a table (one row is one period of the first voice).
Let $X^1_{j,i}$ denote $X_{(j-1)p_1 + i}$.
Then the first $\cs$ columns represent $\phi$ in the following fashion:
The cells on the main diagonal represent the clauses and
the other cells are filled with representation of all the literals,
\begin{itemize}
	\item the domains of variables $X^1_{j,i}$ for $1 \leq j,i \leq \cs$ contain
	$\setcomp{p^i}{p \in C_j}$ if $i = j$,
	otherwise $\lit(\phi)^i$.
\end{itemize}
The following $4\cs\vs$ columns contain representations of all literals for each clause two times,
\begin{itemize}
	\item the domains of variables $X^1_{j,(\cs) + (i-1)2\vs + \vs y + x}$ for $1 \leq j,i \leq \cs$, $0 \leq y \leq 1$ and $1 \leq x \leq \vs$
	contain $p^i_x$ if $y=0$ and $\neg p^i_x$ if $y=1$,
	\item the domains of variables $X^1_{j,(\cs + 2\cs\vs) + (i-1)2\vs + \vs y + x}$ for $1 \leq j,i \leq \cs$, $0 \leq y \leq 1$ and $1 \leq x \leq \vs$
	contain $p^i_x$ if $y=0$ and $\neg p^i_x$ if $y=1$.
\end{itemize}
The last $2\cs\vs$ columns contain representations of all literals
in the usual order in their first $2\vs$ columns,
in their last $2\vs$ in the order where positive and negative literals are swapped,
and \zero\ in the rest of their columns,
\begin{itemize}
	\item the domains of variables $X^1_{j,(\cs + 4\cs\vs) + x}$ for
	$1 \leq j \leq \cs$ and $1 \leq x \leq \vs$ contain $p'_x$,
	\item the domains of variables $X^1_{j,(\cs + 4\cs\vs) + \vs + x}$ for
	$1 \leq j \leq \cs$ and $1 \leq x \leq \vs$ contain $\neg p'_x$,
	\item the domains of variables $X^1_{j,(\cs + 4\cs\vs) + 2\vs + x}$ for
	$1 \leq j \leq \cs$ and $1 \leq x \leq 2\cs\vs - 4\vs$ contain $\zero$,
	\item the domains of variables $X^1_{j,(\cs + 4\cs\vs) + (2\cs\vs-2\vs) + x}$ for
	$1 \leq j \leq \cs$ and $1 \leq x \leq \vs$ contain $\neg p'_x$,
	\item the domains of variables $X^1_{j,(\cs + 4\cs\vs) + (2\cs\vs-\vs) + x}$ for
	$1 \leq j \leq \cs$ and $1 \leq x \leq \vs$ contain $p'_x$.
\end{itemize}
(This is not well defined for $\cs=1$, but $\phi$ is trivially satisfiable in this case.)
The arrangement of the values of the second voice is much more simple.
The period of the second voice is $p_2 = \cs + 6\cs\vs + 2\vs$, so
let $X^2_{j,i}$ denote $X_{(j-1)p_2 + i}$.
Then places from $\cs+1$ to $\cs+2\vs$ and from $\cs+4\cs\vs+1$ to $\cs+4\cs\vs+2\vs$ of each period
contain representations of all literals in the usual order,
\begin{itemize}
	\item the domains of variables $X^2_{j,(\cs) + x}$ for
	$1 \leq j \leq \cs$ and $1 \leq x \leq \vs$ contain $p_x$,
	\item the domains of variables $X^2_{j,(\cs + \vs) + x}$ for
	$1 \leq j \leq \cs$ and $1 \leq x \leq \vs$ contain $\neg p_x$,
	\item the domains of variables $X^2_{j,(\cs + 4\cs\vs) + x}$ for
	$1 \leq j \leq \cs$ and $1 \leq x \leq \vs$ contain $p_x$,
	\item the domains of variables $X^2_{j,(\cs + 4\cs\vs + \vs) + x}$ for
	$1 \leq j \leq \cs$ and $1 \leq x \leq \vs$ contain $\neg p_x$.
\end{itemize}
Finally, the domains of variables that
are not on the first $\cs$ places of any period of the first voice
and do not contain a value from $S_2$ contain \zero:
\begin{itemize}
	\item the domains of variables $X_i$ for
	$i \notin \setcomp{(j-1)p_1 + x}{1 \leq j,x \leq \cs} \cup
	\setcomp{(j-1)p_2 + \cs + x}{1 \leq j \leq \cs, 1 \leq x \leq 2\vs} \cup
	\setcomp{(j-1)p_2 + \cs + 4\cs\vs + x}{1 \leq j \leq \cs, 1 \leq x \leq 2\vs}$
	contain \zero.
\end{itemize}

The constraint for the CNF from our running example would
be imposed on a sequence of variables with domains displayed in the Table~\ref{t:exampleh}.
The last $2\vs k$ domains of $\set{\zero}$ are not displayed in the table.

\begin{sidewaystable}
\caption{
	\label{t:exampleh}Example for reduction from \SAT\ to \SPACINGH.
}
\begin{center}

\newcommand{\compd}[1]{\scriptsize\begin{tabular}{>{$}c<{$}} #1 \end{tabular}}
\begin{tabular}{r|*{4}{|>{\centering\arraybackslash}c}|*{24}{|>{\centering\arraybackslash}c}}
 & 1 & 2 & 3 & 4 & 5 & 6 & 7 & 8 & 9 & 10 & 11 & 12 & 13 & 14 & 15 & 16 & 17 & 18 & 19 & 20 & 21 & 22 & 23 & 24 & 25 & 26 & 27 & 28 \\
\hline
\hline
1 & \compd{\neg p^{1} \\ q^{1} \\ r^{1}} & \compd{p^{2} \\ q^{2} \\ r^{2} \\ \neg p^{2} \\ \neg q^{2} \\ \neg r^{2}} & \compd{p^{3} \\ q^{3} \\ r^{3} \\ \neg p^{3} \\ \neg q^{3} \\ \neg r^{3}} & \compd{p^{4} \\ q^{4} \\ r^{4} \\ \neg p^{4} \\ \neg q^{4} \\ \neg r^{4}} & \compd{p^{1} \\ p} & \compd{q^{1} \\ q} & \compd{r^{1} \\ r} & \compd{\neg p^{1} \\ \neg p} & \compd{\neg q^{1} \\ \neg q} & \compd{\neg r^{1} \\ \neg r} & \compd{p^{2} \\ \zero} & \compd{q^{2} \\ \zero} & \compd{r^{2} \\ \zero} & \compd{\neg p^{2} \\ \zero} & \compd{\neg q^{2} \\ \zero} & \compd{\neg r^{2} \\ \zero} & \compd{p^{3} \\ \zero} & \compd{q^{3} \\ \zero} & \compd{r^{3} \\ \zero} & \compd{\neg p^{3} \\ \zero} & \compd{\neg q^{3} \\ \zero} & \compd{\neg r^{3} \\ \zero} & \compd{p^{4} \\ \zero} & \compd{q^{4} \\ \zero} & \compd{r^{4} \\ \zero} & \compd{\neg p^{4} \\ \zero} & \compd{\neg q^{4} \\ \zero} & \compd{\neg r^{4} \\ \zero} \\
\hline
2 & \compd{p^{1} \\ q^{1} \\ r^{1} \\ \neg p^{1} \\ \neg q^{1} \\ \neg r^{1}} & \compd{\neg q^{2} \\ r^{2}} & \compd{p^{3} \\ q^{3} \\ r^{3} \\ \neg p^{3} \\ \neg q^{3} \\ \neg r^{3}} & \compd{p^{4} \\ q^{4} \\ r^{4} \\ \neg p^{4} \\ \neg q^{4} \\ \neg r^{4}} & \compd{p^{1} \\ \zero} & \compd{q^{1} \\ \zero} & \compd{r^{1} \\ \zero} & \compd{\neg p^{1} \\ \zero} & \compd{\neg q^{1} \\ \zero} & \compd{\neg r^{1} \\ \zero} & \compd{p^{2} \\ p} & \compd{q^{2} \\ q} & \compd{r^{2} \\ r} & \compd{\neg p^{2} \\ \neg p} & \compd{\neg q^{2} \\ \neg q} & \compd{\neg r^{2} \\ \neg r} & \compd{p^{3} \\ \zero} & \compd{q^{3} \\ \zero} & \compd{r^{3} \\ \zero} & \compd{\neg p^{3} \\ \zero} & \compd{\neg q^{3} \\ \zero} & \compd{\neg r^{3} \\ \zero} & \compd{p^{4} \\ \zero} & \compd{q^{4} \\ \zero} & \compd{r^{4} \\ \zero} & \compd{\neg p^{4} \\ \zero} & \compd{\neg q^{4} \\ \zero} & \compd{\neg r^{4} \\ \zero} \\
\hline
3 & \compd{p^{1} \\ q^{1} \\ r^{1} \\ \neg p^{1} \\ \neg q^{1} \\ \neg r^{1}} & \compd{p^{2} \\ q^{2} \\ r^{2} \\ \neg p^{2} \\ \neg q^{2} \\ \neg r^{2}} & \compd{\neg p^{3} \\ \neg q^{3}} & \compd{p^{4} \\ q^{4} \\ r^{4} \\ \neg p^{4} \\ \neg q^{4} \\ \neg r^{4}} & \compd{p^{1} \\ \zero} & \compd{q^{1} \\ \zero} & \compd{r^{1} \\ \zero} & \compd{\neg p^{1} \\ \zero} & \compd{\neg q^{1} \\ \zero} & \compd{\neg r^{1} \\ \zero} & \compd{p^{2} \\ \zero} & \compd{q^{2} \\ \zero} & \compd{r^{2} \\ \zero} & \compd{\neg p^{2} \\ \zero} & \compd{\neg q^{2} \\ \zero} & \compd{\neg r^{2} \\ \zero} & \compd{p^{3} \\ p} & \compd{q^{3} \\ q} & \compd{r^{3} \\ r} & \compd{\neg p^{3} \\ \neg p} & \compd{\neg q^{3} \\ \neg q} & \compd{\neg r^{3} \\ \neg r} & \compd{p^{4} \\ \zero} & \compd{q^{4} \\ \zero} & \compd{r^{4} \\ \zero} & \compd{\neg p^{4} \\ \zero} & \compd{\neg q^{4} \\ \zero} & \compd{\neg r^{4} \\ \zero} \\
\hline
4 & \compd{p^{1} \\ q^{1} \\ r^{1} \\ \neg p^{1} \\ \neg q^{1} \\ \neg r^{1}} & \compd{p^{2} \\ q^{2} \\ r^{2} \\ \neg p^{2} \\ \neg q^{2} \\ \neg r^{2}} & \compd{p^{3} \\ q^{3} \\ r^{3} \\ \neg p^{3} \\ \neg q^{3} \\ \neg r^{3}} & \compd{p^{4} \\ q^{4}} & \compd{p^{1} \\ \zero} & \compd{q^{1} \\ \zero} & \compd{r^{1} \\ \zero} & \compd{\neg p^{1} \\ \zero} & \compd{\neg q^{1} \\ \zero} & \compd{\neg r^{1} \\ \zero} & \compd{p^{2} \\ \zero} & \compd{q^{2} \\ \zero} & \compd{r^{2} \\ \zero} & \compd{\neg p^{2} \\ \zero} & \compd{\neg q^{2} \\ \zero} & \compd{\neg r^{2} \\ \zero} & \compd{p^{3} \\ \zero} & \compd{q^{3} \\ \zero} & \compd{r^{3} \\ \zero} & \compd{\neg p^{3} \\ \zero} & \compd{\neg q^{3} \\ \zero} & \compd{\neg r^{3} \\ \zero} & \compd{p^{4} \\ p} & \compd{q^{4} \\ q} & \compd{r^{4} \\ r} & \compd{\neg p^{4} \\ \neg p} & \compd{\neg q^{4} \\ \neg q} & \compd{\neg r^{4} \\ \neg r} \\
\end{tabular}

\bigskip

\begin{tabular}{r|*{9}{|>{\centering\arraybackslash}c}|*{24}{|>{\centering\arraybackslash}c}}
 & 29 & 30 & 31 & 32 &  & 49 & 50 & 51 & 52 & 53 & 54 & 55 & 56 & 57 & 58 & 59 & 60 & 61 & 62 & 63 & 64 & 65 & 66 & 67 & 68 & 69 & 70 & 71 & 72 & 73 & 74 & 75 & 76 \\
\hline
\hline
1 & \compd{p^{1} \\ \zero} & \compd{q^{1} \\ \zero} & \compd{r^{1} \\ \zero} & \compd{\neg p^{1} \\ \zero} & \compd{\ldots} & \compd{r^{4} \\ \zero} & \compd{\neg p^{4} \\ \zero} & \compd{\neg q^{4} \\ \zero} & \compd{\neg r^{4} \\ \zero} & \compd{p' \\ p} & \compd{q' \\ q} & \compd{r' \\ r} & \compd{\neg p' \\ \neg p} & \compd{\neg q' \\ \neg q} & \compd{\neg r' \\ \neg r} & \compd{\zero} & \compd{\zero} & \compd{\zero} & \compd{\zero} & \compd{\zero} & \compd{\zero} & \compd{\zero} & \compd{\zero} & \compd{\zero} & \compd{\zero} & \compd{\zero} & \compd{\zero} & \compd{\neg p' \\ \zero} & \compd{\neg q' \\ \zero} & \compd{\neg r' \\ \zero} & \compd{p' \\ \zero} & \compd{q' \\ \zero} & \compd{r' \\ \zero} \\
\hline
2 & \compd{p^{1} \\ \zero} & \compd{q^{1} \\ \zero} & \compd{r^{1} \\ \zero} & \compd{\neg p^{1} \\ \zero} & \compd{\ldots} & \compd{r^{4} \\ \zero} & \compd{\neg p^{4} \\ \zero} & \compd{\neg q^{4} \\ \zero} & \compd{\neg r^{4} \\ \zero} & \compd{p' \\ \zero} & \compd{q' \\ \zero} & \compd{r' \\ \zero} & \compd{\neg p' \\ \zero} & \compd{\neg q' \\ \zero} & \compd{\neg r' \\ \zero} & \compd{\zero \\ p} & \compd{\zero \\ q} & \compd{\zero \\ r} & \compd{\zero \\ \neg p} & \compd{\zero \\ \neg q} & \compd{\zero \\ \neg r} & \compd{\zero} & \compd{\zero} & \compd{\zero} & \compd{\zero} & \compd{\zero} & \compd{\zero} & \compd{\neg p' \\ \zero} & \compd{\neg q' \\ \zero} & \compd{\neg r' \\ \zero} & \compd{p' \\ \zero} & \compd{q' \\ \zero} & \compd{r' \\ \zero} \\
\hline
3 & \compd{p^{1} \\ \zero} & \compd{q^{1} \\ \zero} & \compd{r^{1} \\ \zero} & \compd{\neg p^{1} \\ \zero} & \compd{\ldots} & \compd{r^{4} \\ \zero} & \compd{\neg p^{4} \\ \zero} & \compd{\neg q^{4} \\ \zero} & \compd{\neg r^{4} \\ \zero} & \compd{p' \\ \zero} & \compd{q' \\ \zero} & \compd{r' \\ \zero} & \compd{\neg p' \\ \zero} & \compd{\neg q' \\ \zero} & \compd{\neg r' \\ \zero} & \compd{\zero} & \compd{\zero} & \compd{\zero} & \compd{\zero} & \compd{\zero} & \compd{\zero} & \compd{\zero \\ p} & \compd{\zero \\ q} & \compd{\zero \\ r} & \compd{\zero \\ \neg p} & \compd{\zero \\ \neg q} & \compd{\zero \\ \neg r} & \compd{\neg p' \\ \zero} & \compd{\neg q' \\ \zero} & \compd{\neg r' \\ \zero} & \compd{p' \\ \zero} & \compd{q' \\ \zero} & \compd{r' \\ \zero} \\
\hline
4 & \compd{p^{1} \\ \zero} & \compd{q^{1} \\ \zero} & \compd{r^{1} \\ \zero} & \compd{\neg p^{1} \\ \zero} & \compd{\ldots} & \compd{r^{4} \\ \zero} & \compd{\neg p^{4} \\ \zero} & \compd{\neg q^{4} \\ \zero} & \compd{\neg r^{4} \\ \zero} & \compd{p' \\ \zero} & \compd{q' \\ \zero} & \compd{r' \\ \zero} & \compd{\neg p' \\ \zero} & \compd{\neg q' \\ \zero} & \compd{\neg r' \\ \zero} & \compd{\zero} & \compd{\zero} & \compd{\zero} & \compd{\zero} & \compd{\zero} & \compd{\zero} & \compd{\zero} & \compd{\zero} & \compd{\zero} & \compd{\zero} & \compd{\zero} & \compd{\zero} & \compd{\neg p' \\ p} & \compd{\neg q' \\ q} & \compd{\neg r' \\ r} & \compd{p' \\ \neg p} & \compd{q' \\ \neg q} & \compd{r' \\ \neg r} \\
\end{tabular}

\end{center}
\end{sidewaystable}

Let us organize the sequence $X$ into a table with $\cs + 6\cs\vs$ columns again
(using $X^1_{j,i}$ notation) and ignore the last $2\cs\vs$ variables with domains $\set{\zero}$.
Each row belongs to one clause.
Also there is one set of literal representing values for each clause ($\lit(\phi)^j$ for $C_j$).
Clauses form domains on the main diagonal using their sets of values.
The rest of the domains in the first $\cs$ columns are constructed to
enable repetitions of values selected for the clause variables.
The next $2\cs\vs$ columns are called \emph{positive} part,
because the values from $S_2$ it contains will represent a model of $\phi$.
Due to the difference between periods, in each row $j$
these values from $S_2$ are aligned with the values of the clause of the row (values from $\lit(\phi)^j$).
Thanks to this, when a value $d^j$ is assigned to the variable of clause $C_j$,
$d^j$ cannot occur in the rest of the row, thus also not in the positive part,
so the appropriate value $d$ has to be selected in this part.
This is why, having a solution of the constraint,
at least one literal from each clause will be assigned as a value from $S_2$
to some variable in the positive part.

The last $2\cs\vs$ columns are \emph{consistency} part, because
they ensure the consistency of the model represented by a solution.
When a value $d \in S_2$ is assigned to a variable in the positive part,
it has to be repeated in the positive part in each row.
Also it cannot be repeated in the rest of any row.
Hence a corresponding primed value $d' \in S_1$ must be assigned
to one of the first $2\vs$ places of the consistency part of the first row (and also each other row),
thus $d'$ cannot repeat in the last $2\vs$ places of the consistency part in any row,
mainly not in the last one.
Please note that the positive and negative primed values are switched in the last $2\vs$ places,
so that, in the last row, the negative primed values are aligned with the positive values from $S_2$
and the positive ones are aligned with the negative ones.
This whole construction of the domains causes that if two complementary literal values
$d, \overline{d} \in S_2$ are assigned to variables in the positive part,
they cannot be assigned to the first $2\vs$ variables in the first row of the consistency part,
so both $d', \overline{d'} \in S_1$ have to be assigned in these first $2\vs$ variables and
cannot be assigned to the last $2\vs$ variables in the last row of the consistency part.
However, the same assignment of values from $S_2$ must be repeated in each row of this part,
so two variables in the last $2\vs$ places of the last row would be left with empty domains.
Analogous reasoning rules out the case when
none of $d, \overline{d} \in S_2$ is assigned in the positive part.
Also it is easy to see that when
exactly one of $d, \overline{d} \in S_2$ is assigned in the positive part,
no contradiction is reached.
This shows that, having a solution of the constraint,
the set of literals assigned as a values from $S_2$ in the positive part is consistent.
So, together with the result above, having a solution of the constraint,
the set of literals assigned as a values from $S_2$ in the positive part is a model of $\phi$.

We can obtain a solution of the constraint from a model of $\phi$ in the following way:
We assign the literals from the model as a values of the second voice to the variables in the positive part.
Then we choose a value that represents some literal satisfied by the model for the clause variables.
And assign the rest of the variables so that the constraint holds.
It is always possible to do this.
It is always possible to assign the literals from the model to the positive part, because
there is one variable for each literal in each row of the part and
these variables repeat with the period of the second voice (where the values are from).
It is always possible to complete the assignment to the positive part, because
each value has an alternative in each domain and each value can be repeated with it's period.
As long as the model is consistent, no contradiction can be reached in the consistency part
because of the following:
For any pair of complementary literals $d, \overline{d} \in S_2$,
one of them is assigned in the positive part and the other one is not.
The selection of the values from $S_2$ in the consistency part is
reversed with comparison to the positive part, so the mutual exclusion still holds.
The selection of the primed values in the first $2\vs$ columns of the consistency part will
represent the model again and the selection of the primed values in the last $2\vs$ columns
will be reversed again.
Due to the fact that the positive values are swapped for the negative ones in the last $2\vs$ columns,
assignment of the primed values complements the assignment of values from $S_2$
in the first as well in the last $2\vs$ columns and no domains are emptied.
It is always possible to select some value for the clause variables because of the following:
In the positive part of each row $j$,
the selection of the values indexed with $j$ will be complementary
to the selection of the values from the model.
This leaves the possibility for the indexed values representing literals satisfied by the model
to be selected elsewhere in the row, for instance in the clause variable.
Each clause variable always contains at lease one value representing literal satisfied by the model.
Assignment to the clause variables can be repeated in each row and
the primed values that could not be selected in the positive part nor in the clause values
can still be assigned to some of the variables in the columns
between $\cs+2\cs\vs+1$ and $\cs+4\cs\vs$ called \emph{negative} part.
\qed
\end{proof}

\section{Incomplete filtering of \SPACINGH}

Since already a conjunction of two \SPACINGONE\ constraints is NP-hard,
we decided to devise an incomplete additional rule that facilitates propagation between
two \SPACINGONE\ constraints on the same sequence.
If periods of these two constraints are of different length,
assigning a value from the first one to one variable forbids
assigning any value from the second one to multiple variables.
For example, take $\SPACINGONE(\set{a,b}, 5, 4, X)$ and $\SPACINGONE(\set{c,d}, 7, 3, X)$
on the same sequence $X$ of 21 variables.
We start with full domains for all variables.
After assign $a$ to the first variable and filtering the domains are:
\begin{center}
\newcommand{\compd}[1]{\scriptsize\begin{tabular}{>{$}c<{$}} #1 \end{tabular}}
\begin{tabular}{r|*{21}{|>{\centering\arraybackslash}p{3.7mm}}}
$i = $ & 1 & 2 & 3 & 4 & 5 & 6 & 7 & 8 & 9 & 10 & 11 & 12 & 13 & 14 & 15 & 16 & 17 & 18 & 19 & 20 & 21 \\
\hline
\hline
$D(X_i) =$ &
\compd{ a \\ \ \\ \ \\ \ \\ \ } &
\compd{ \ \\ b \\ \ \\ \ \\ \zero } &
\compd{ \ \\ b \\ c \\ d \\ \zero } &
\compd{ \ \\ b \\ \ \\ \ \\ \zero } &
\compd{ \ \\ b \\ c \\ d \\ \zero } &
\compd{ a \\ \ \\ \ \\ \ \\ \ } &
\compd{ \ \\ b \\ c \\ d \\ \zero } &
\compd{ \ \\ b \\ \ \\ \ \\ \zero } &
\compd{ \ \\ b \\ \ \\ \ \\ \zero } &
\compd{ \ \\ b \\ c \\ d \\ \zero } &
\compd{ a \\ \ \\ \ \\ \ \\ \ } &
\compd{ \ \\ b \\ c \\ d \\ \zero } &
\compd{ \ \\ b \\ \ \\ \ \\ \zero } &
\compd{ \ \\ b \\ c \\ d \\ \zero } &
\compd{ \ \\ b \\ \ \\ \ \\ \zero } &
\compd{ a \\ \ \\ \ \\ \ \\ \ } &
\compd{ \ \\ b \\ c \\ d \\ \zero } &
\compd{ \ \\ b \\ \ \\ \ \\ \zero } &
\compd{ \ \\ b \\ c \\ d \\ \zero } &
\compd{ \ \\ b \\ \ \\ \ \\ \zero } &
\compd{ \ \\ \ \\ c \\ d \\ \zero } \\
\end{tabular}
\end{center}
If we assign $b$ to $X_4$, the assignment must be repeated every 5 variables.
These repetitions occur on different places of the second period,
so sufficient number of repetitions of $c$ and $d$ will not be possible.
Subsequent filtering would remove circled values
\begin{center}
\newcommand{\compd}[1]{\scriptsize\begin{tabular}{>{$}c<{$}} #1 \end{tabular}}
\begin{tabular}{r|*{21}{|>{\centering\arraybackslash}p{3.7mm}}}
$i = $ & 1 & 2 & 3 & 4 & 5 & 6 & 7 & 8 & 9 & 10 & 11 & 12 & 13 & 14 & 15 & 16 & 17 & 18 & 19 & 20 & 21 \\
\hline
\hline
$D(X_i) =$ &
\compd{ a \\ \ \\ \ \\ \ \\ \ } &
\compd{ \ \\ \textcircled b \\ \ \\ \ \\ \zero } &
\compd{ \ \\ \textcircled b \\ c \\ d \\ \zero } &
\compd{ \ \\ b \\ \ \\ \ \\ \textcircled \zero } &
\compd{ \ \\ \textcircled b \\ \textcircled c \\ \textcircled d \\ \zero } &
\compd{ a \\ \ \\ \ \\ \ \\ \ } &
\compd{ \ \\ \textcircled b \\ \textcircled c \\ \textcircled d \\ \zero } &
\compd{ \ \\ \textcircled b \\ \ \\ \ \\ \zero } &
\compd{ \ \\ b \\ \ \\ \ \\ \textcircled \zero } &
\compd{ \ \\ \textcircled b \\ c \\ d \\ \zero } &
\compd{ a \\ \ \\ \ \\ \ \\ \ } &
\compd{ \ \\ \textcircled b \\ \textcircled c \\ \textcircled d \\ \zero } &
\compd{ \ \\ \textcircled b \\ \ \\ \ \\ \zero } &
\compd{ \ \\ b \\ \textcircled c \\ \textcircled d \\ \textcircled \zero } &
\compd{ \ \\ \textcircled b \\ \ \\ \ \\ \zero } &
\compd{ a \\ \ \\ \ \\ \ \\ \ } &
\compd{ \ \\ \textcircled b \\ c \\ d \\ \zero } &
\compd{ \ \\ \textcircled b \\ \ \\ \ \\ \zero } &
\compd{ \ \\ b \\ \textcircled c \\ \textcircled d \\ \textcircled \zero } &
\compd{ \ \\ \textcircled b \\ \ \\ \ \\ \zero } &
\compd{ \ \\ \ \\ \textcircled c \\ \textcircled d \\ \zero } \\
\end{tabular}
\end{center}
and render the second constraint unsatisfiable.
The same course of reasoning holds for assigning $b$ to $X_2$ and $X_5$.
Thus only possibility is to assign $b$ to $X_3$,
which leaves only two places free in the period of the second \SPACINGONE,
hence two symmetrical solutions.

Formally, we first need to count a number of places that may contain a value from $S_l$
for each \SPACINGONE\ constraint $l$. We denote it by $u_l$ and define as
$$
u_l = |\setcomp{i}{S_l \cap P^l_i \neq \emptyset, 1 \leq i \leq p_l}|
$$
where $P^l_i$ are the folded domains from the proof of Theorem~\ref{thm:tract1} for the constraint $l$.
Now we need to count a number of places that are not blocked for values from $S_{l_2}$,
but would become blocked after assigning a value from $S_{l_1}$ to a variable
$X_i$ in the first period of $l_1$ (i.e. $1 \leq i \leq p_{l_1}$).
This count is denoted by $b^{l_1, l_2}_i$ and defined as
$$
|\setcomp{x}{x = ((i + j p_{l_1} -1) \mymod p_{l_2})+1, 0 \leq j < k_{l_1}, i + j p_{l_1} \leq k_{l_2} p_{l_2}, S_{l_2} \cap D(X_x) \neq \emptyset}|
$$
for each pair of different voices $l_1$, $l_2$ and $1 \leq i \leq p_{l_1}$.

\begin{theorem}\label{thm:soundR}
Let $l_1 = \SPACINGONE(S_{l_1}, p_{l_1}, k_{l_1}, X)$ and $l_2 = \SPACINGONE(S_{l_2}, p_{l_2}, k_{l_2}, X)$
be two constraints defined on the same sequence $X$ with $S_{l_1} \cap S_{l_2} = \emptyset$ that are \DC.
Let $u_{l_2}$ and $b^{l_1, l_2}_i$ be defined as above.
If $b^{l_1, l_2}_i > u_{l_2} - |S_{l_2}|$ then
there is no support for $l_1 \  \wedge \  l_2$ with any value from $S_{l_1}$ assigned to $X_i$
within $1 \leq i \leq p_{l_1}$.
\end{theorem}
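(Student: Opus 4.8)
The plan is to prove the contrapositive: assume there is a support $\sigma$ for $l_1 \wedge l_2$ in which some $d \in S_{l_1}$ is assigned to $X_i$ for a fixed $i$ with $1 \leq i \leq p_{l_1}$, and derive $b^{l_1, l_2}_i \leq u_{l_2} - |S_{l_2}|$. First I would record the rigid structure $\sigma$ must have. Since $\sigma$ satisfies $l_1 = \SPACINGONE(S_{l_1}, p_{l_1}, k_{l_1}, X)$ and $X_i = d$ with $i$ in the first period, the definition of $\SPACINGONE$ forces $X_{i + jp_{l_1}} = d$ for all $0 \leq j < k_{l_1}$ (all these indices being at most $k_{l_1}p_{l_1} \leq n$). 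As $S_{l_1} \cap S_{l_2} = \emptyset$, we get $d \notin S_{l_2}$, so none of the positions $i + jp_{l_1}$ carries a value of $S_{l_2}$ in $\sigma$. Dually, since $\sigma$ satisfies $l_2$, there is an ``onset set'' $I \subseteq \{1, \ldots, p_{l_2}\}$ with $|I| = |S_{l_2}|$ such that, writing $\mu(y) = ((y-1) \mymod p_{l_2}) + 1$ for the position of index $y$ inside a period of $l_2$, a position $y$ with $1 \leq y \leq k_{l_2}p_{l_2}$ holds a value of $S_{l_2}$ in $\sigma$ iff $\mu(y) \in I$ (and no $y > k_{l_2}p_{l_2}$ does). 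Writing $U_{l_2} = \{x \mid S_{l_2} \cap P^{l_2}_x \neq \emptyset,\ 1 \leq x \leq p_{l_2}\}$, so $|U_{l_2}| = u_{l_2}$, I would also note $I \subseteq U_{l_2}$: the value at an onset slot $x \in I$ lies in $S_{l_2}$ and, by $l_2$, recurs at $x, x + p_{l_2}, \ldots, x + (k_{l_2}-1)p_{l_2}$, hence lies in $P^{l_2}_x$.

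Next I would recognise $b^{l_1, l_2}_i$ as the cardinality of the set
\[
B = \{\, \mu(i + jp_{l_1}) \mid 0 \leq j < k_{l_1},\ i + jp_{l_1} \leq k_{l_2}p_{l_2},\ S_{l_2} \cap D(X_{\mu(i + jp_{l_1})}) \neq \emptyset \,\},
\]
which is immediate from the definition of $b^{l_1, l_2}_i$, and then show $B \subseteq U_{l_2}$ and $B \cap I = \emptyset$. For $B \subseteq U_{l_2}$: if $1 \leq x \leq p_{l_2}$ and $D(X_x) \cap S_{l_2} \neq \emptyset$, pick $d' \in D(X_x) \cap S_{l_2}$; since $l_2$ is \DC\ there is a support with $X_x = d'$, and there $d'$ occurs at $x, x + p_{l_2}, \ldots, x + (k_{l_2}-1)p_{l_2}$ (the first occurrence is at $x$ because $x \leq p_{l_2}$), so $d' \in \bigcap_{j'=0}^{k_{l_2}-1} D(X_{x + j'p_{l_2}}) = P^{l_2}_x$ and thus $x \in U_{l_2}$. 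For $B \cap I = \emptyset$: take $x = \mu(i + jp_{l_1}) \in B$; then $i + jp_{l_1} = x + j'p_{l_2}$ for the nonnegative integer $j' = (i + jp_{l_1} - x)/p_{l_2}$, and $j' < k_{l_2}$ because $j'p_{l_2} = i + jp_{l_1} - x \leq k_{l_2}p_{l_2} - 1$; if $x$ lay in $I$, then $\sigma$ would place a value of $S_{l_2}$ at position $x + j'p_{l_2} = i + jp_{l_1}$, contradicting that this position holds $d \notin S_{l_2}$. Hence $x \notin I$.

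Combining, $B$ and $I$ are disjoint subsets of $U_{l_2}$ with $|I| = |S_{l_2}|$, so $b^{l_1, l_2}_i = |B| \leq |U_{l_2}| - |I| = u_{l_2} - |S_{l_2}|$, which contradicts the hypothesis $b^{l_1, l_2}_i > u_{l_2} - |S_{l_2}|$; therefore no such support $\sigma$ exists. I expect the main obstacle to be getting the ``linear bookkeeping'' exactly right at two points: that \DC\ of $l_2$ is precisely what lets one pass from ``$D(X_x)$ meets $S_{l_2}$'' to ``$x \in U_{l_2}$'' (without it one only obtains the weaker bound $p_{l_2} - |S_{l_2}|$), and the arithmetic with the two periods $p_{l_1}, p_{l_2}$ together with the cutoff $i + jp_{l_1} \leq k_{l_2}p_{l_2}$ that makes a forced repetition of $d$ actually collide with an onset slot of $l_2$; the remainder is pure counting.
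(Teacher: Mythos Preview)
Your argument is correct and follows essentially the same route as the paper: both assume some $d\in S_{l_1}$ is placed at $X_i$, propagate the forced repetitions $X_{i+jp_{l_1}}=d$, use \DC\ of $l_2$ to identify $S_{l_2}\cap D(X_x)$ with $S_{l_2}\cap P^{l_2}_x$ for $1\le x\le p_{l_2}$ (your $B\subseteq U_{l_2}$), observe that the newly blocked positions are disjoint from the onset positions of $l_2$ (your $B\cap I=\emptyset$), and conclude by the pigeonhole count $u_{l_2}-b^{l_1,l_2}_i<|S_{l_2}|$. Your write-up is simply a more explicit contrapositive version of the paper's direct counting argument.
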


\begin{proof}
Suppose the premises of the theorem hold.
The places in the first period of $l_2$ that
may hold a value from $S_{l_2}$ (together with places that already hold it) are called \emph{free}.
The rest of the places in the first period of $l_2$ are called \emph{blocked}.
The number of the free places is $u_{l_2}$.
Suppose that we assigned a value $d_1 \in S_{l_1}$ to $X_i$.
Then $d_1$ has to repeat on the same place in each period of $l_1$,
that is places $i + j p_{l_1}$ for each $0 \leq j < k_{l_1}$.
Now no value from $S_{l_2}$ can be put on those of these places
that are constrained by $l_2$, i.e. $i + j p_{l_1} \leq k_{l_2} p_{l_2}$ for $0 \leq j < k_{l_1}$.
So no value from $S_{l_2}$ can repeat on whatever places in the period $p_{l_2}$ these places are aligned with.
Arbitrary index $y$ into the sequence $X$ is $((y-1) \mymod p_{l_2})+1$-st place in a period of $l_2$,
so no value from $S_{l_2}$ can be assigned to any variable with index $x = ((i + j p_{l_1} -1) \mymod p_{l_2})+1$
where $i + j p_{l_1} \leq k_{l_2} p_{l_2}$ for $0 \leq j < k_{l_1}$.
While $l_2$ is \DC\ and $1 \leq x \leq p_{l_2}$, $S_{l_2} \cap D(X_x) = S_{l_2} \cap P^{l_2}_x$.
This restricts the places we are counting in $b^{l_1, l_2}_i$ to those that were free,
so $b^{l_1, l_2}_i$ is the number of places that are blocked exclusively by assigning $d_1$ to $X_i$.
Finally, suppose $b^{l_1, l_2}_i > u_{l_2} - |S_{l_2}|$ holds.
This is equivalent to $u_{l_2} - b^{l_1, l_2}_i < |S_{l_2}|$ which means that
the number of free places without the places blocked solely by $d_1$ on the place $i$
(the places left free after the assignment)
is less than the number of values from $S_{l_2}$ we need to assign to one period of $l_2$.
The constraint $l_2$ is obviously unsatisfiable in this case,
so there is no support for $l_1 \  \wedge \  l_2$ with any value $d \in S_{l_1}$ assigned to $X_i$.
\qed
\end{proof}

\section{Application}

One of the applications of CSP in music composition is
the problem of \emph{Asynchronous rhythms} described by Truchet~\cite{TruchetPhD,MusicBook}.
Having $h$ voices and a time horizon $H$,
the goal is to construct one rhythmical pattern for each voice.
The patterns are played repeatedly until the time horizon is reached.
The pattern of voice $l$ is of length $p_l$ and consists of $m_l$ onsets that
have to be placed so that no two onsets are played at the same time.
We consider also extension of this problem, where
composer may want to impose additional constraints, such as
forbidding or enforcing particular onsets on particular places, or
restricting the density of the onsets e.g. at most one onset on two successive places.

Truchet~\cite{TruchetPhD,MusicBook} formalized this problem in terms of variables $V_{l,i}$ that denote time
when an onset $i$ is played in the pattern of voice $l$ (so $D(V_{l,i}) = \set{1, \ldots, p_l}$) and
the following constraints: $\AllDifferent(\set{V_{l,1}, \ldots, V_{l,m_l}})$ for each voice $l$, and
$V_{l_1,i_1} + j_1 p_{l_1} \neq V_{l_2,i_2} + j_2 p_{l_2}$ for each pair of different voices $l_1$, $l_2$,
each onset $i_1$, $i_2$ in each voice respectively and
each $j_1$, $j_2$ s.t. $V_{l_1,i_1} + j_1 p_1 \leq H$ and $V_{l_2,i_2} + j_2 p_2 \leq H$.
This, however, cannot be encoded as a CSP model,
because if $H \mymod p_l \neq 0$ for some voice $l$ then
the range of $j_l$ in the later constraint depends on $V_{l,i}$ for each onset $i$.
We fix this by minor modification, where
the pattern of each voice $l$ is repeated exactly $k_l$ times and
the indexes $j_1$, $j_2$ in the last constraint are quantified by $0 \leq j_1 < k_{l_1}$ and $0 \leq j_2 < k_{l_2}$.
We will call this model \oModel.

We encoded the problem into three new models.
The first model (\sModel)
uses one $\SPACINGONE(S_l, p_l, k_l, X)$ for each voice $l$ on the same sequence of variables $X$.
The sets of values $S_l$ represent the onsets for each voice $l$, so they must be pairwise disjoint.
The length of the pattern of voice $l$ is $p_l$ and the number of its repetitions is $k_l$.
Each variable $X_j$ represents the onset played in the beat $j$, so
$D(X_j) = \bigcup_{l = 1}^h S_l \cup \set{\zero}$ where $\zero \not\in \bigcup_{l = 1}^h S_l$.

\begin{theorem}\label{thm:strongS}
Enforcing \DC\ on \sModel\ is strictly stronger than enforcing \DC\ on \oModel.
\end{theorem}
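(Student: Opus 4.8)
The plan is to prove two facts: that enforcing \DC\ on \sModel\ deletes from every onset variable $V_{l,i}$ at least as much as enforcing \DC\ on \oModel, and that there is an instance on which \sModel-\DC\ fails while \oModel-\DC\ does not. To compare the two models I would fix the obvious channelling: for each voice $l$ choose a bijection between its onsets and the values $S_l=\{s_{l,1},\ldots,s_{l,m_l}\}$, and identify the event ``$V_{l,i}=t$'' with ``$X_{t+jp_l}=s_{l,i}$ for all $0\le j<k_l$''. By the correspondence established in the proof of Theorem~\ref{thm:tract1}, applied voice by voice, \sModel\ and \oModel\ have the same solution set under this channelling, so both \DC\ fixpoints lie inside the same initial domains and it is enough to compare what each one removes from the variables $V_{l,i}$.

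For the domination direction I would induct on the sequence of value deletions performed while reaching the \oModel\ \DC\ fixpoint, keeping the invariant that every deletion already made has also been made by \sModel-\DC. A deletion justified by an $\AllDifferent(V_{l,1},\ldots,V_{l,m_l})$ constraint is matched by \sModel, because $\SPACINGONE(S_l,p_l,k_l,X)$ entails $\AllDifferent$ on the onsets of voice $l$ (channelled solutions place distinct onsets on distinct positions of $\{1,\ldots,p_l\}$), a stronger constraint enforces at least as much \DC, and the folded domains $P^l_i$ used by the propagator of Theorem~\ref{thm:tract1} are only more restrictive than the raw domains of the $V_{l,i}$, so the matching argument subsumes the Hall reasoning of $\AllDifferent$. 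The only other deletions in \oModel\ are those justified by a binary disequality $V_{l_1,i_1}+j_1p_{l_1}\neq V_{l_2,i_2}+j_2p_{l_2}$; being a binary $\neq$ constraint, it can remove a value $t$ from $D(V_{l_1,i_1})$ only once $D(V_{l_2,i_2})$ has shrunk to a singleton $\{t_2\}$ with $t+j_1p_{l_1}=t_2+j_2p_{l_2}$. By the inductive hypothesis $D(V_{l_2,i_2})$ is already $\{t_2\}$ in the \sModel\ state, so in the matching graph of $\SPACINGONE(S_{l_2},p_{l_2},k_{l_2},X)$ the value $s_{l_2,i_2}$ retains only its edge to position $t_2$; every maximum matching uses that edge, hence the propagator forces $X_{t_2+j_2p_{l_2}}=s_{l_2,i_2}$. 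Since $t+j_1p_{l_1}=t_2+j_2p_{l_2}$, the variable $X_{t+j_1p_{l_1}}$ coincides with $X_{t_2+j_2p_{l_2}}$ and sits at position $t$ of voice $l_1$'s period; its singleton domain $\{s_{l_2,i_2}\}$ does not contain $s_{l_1,i_1}$ because the sets $S_l$ are pairwise disjoint, so folding removes $s_{l_1,i_1}$ from $P^{l_1}_t$, and therefore $\SPACINGONE(S_{l_1},p_{l_1},k_{l_1},X)$-\DC\ removes $s_{l_1,i_1}$ from $D(X_{t+j_1p_{l_1}})$, i.e.\ removes $t$ from $D(V_{l_1,i_1})$. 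This closes the induction (the case where \oModel\ empties a domain is the same argument applied to the last deleted value), so at the \oModel\ fixpoint \sModel-\DC\ has removed everything \oModel-\DC\ has.

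For strictness I would use a two-voice instance in which both voices are ``full'', i.e.\ each has as many onsets as the length of its period. Take $p_1=p_2=2$, $S_1=\{a,b\}$, $S_2=\{c,d\}$, $k_1=k_2=2$, and $n=4$. Then $\SPACINGONE(S_1,2,2,X)$ forces $X_j\in\{a,b\}$ for every $1\le j\le 4$ and $\SPACINGONE(S_2,2,2,X)$ forces $X_j\in\{c,d\}$ for every $1\le j\le 4$, so their conjunction has no solution and one propagator empties a domain; hence \sModel-\DC\ fails. In \oModel, however, $\AllDifferent(V_{1,1},V_{1,2})$ and $\AllDifferent(V_{2,1},V_{2,2})$ with domains $\{1,2\}$ are \DC, and for every disequality $V_{1,i_1}+2j_1\neq V_{2,i_2}+2j_2$ with $j_1,j_2\in\{0,1\}$ and any $t_1\in\{1,2\}$ the value $3-t_1\in D(V_{2,i_2})$ has parity opposite to that of $t_1+2j_1-2j_2$, hence provides a support; symmetrically every value of every $V_{2,i}$ has a support. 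So the \oModel\ \DC\ fixpoint leaves every domain full and detects nothing although the instance is unsatisfiable. Combined with the domination direction, this shows that \DC\ on \sModel\ is strictly stronger than \DC\ on \oModel.

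The hard part is the domination direction, and within it the disequality case: one must observe that a singleton domain of an onset variable in \oModel\ corresponds to a value with a single surviving edge in the relevant $\SPACINGONE$ matching graph, that this forces the corresponding position of $X$, and that the folding step of the Theorem~\ref{thm:tract1} propagator then transmits the consequence to the other voice; the induction over \oModel\ propagation steps is what lifts these local facts to a statement about fixpoints. The strictness half becomes routine once the ``two full voices'' instance is written down.
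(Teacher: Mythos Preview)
Your proof is correct. For the domination direction you argue by induction on the sequence of \oModel\ deletions, whereas the paper argues directly that whenever \sModel\ is \DC\ every \oModel\ constraint is already \DC; these are two packagings of the same observation, namely that a binary disequality can only prune when one side has become a singleton, and a singleton onset in \oModel\ corresponds to a value with a single surviving edge in the relevant $\SPACINGONE$ matching graph, which forces a position of $X$ and then propagates across voices via folding. Your handling of the $\AllDifferent$ case (``a stronger constraint enforces at least as much \DC'') is terser than the paper's explicit supergraph-of-matchings argument but is sound, since the solution sets coincide under the channelling and $\SPACINGONE$ semantically entails the $\AllDifferent$ on the same voice.

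For strictness you use a different witness than the paper. You take two voices with equal periods $p_1=p_2=2$, $k_1=k_2=2$, and $|S_1|=|S_2|=2$; since each voice is ``full'' ($|S_l|=p_l$), \DC\ on either $\SPACINGONE$ alone already forces every $X_j$ into its own $S_l$, so the conjunction is immediately infeasible, while in \oModel\ no onset variable ever becomes a singleton and the parity argument keeps every disequality \DC. The paper instead uses unequal periods $p_1=2$, $p_2=3$ with $m_2=1$ together with an additional side constraint pinning an onset of voice~1 to position~1. Your instance is arguably cleaner in that it needs no side constraint outside the model definitions; the paper's instance, on the other hand, exhibits the phenomenon with distinct periods, which is closer to the intended application. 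Either witness establishes strictness.
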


\begin{proof}
First we show that
if every constraint in \sModel\ is \DC, then also every constraint in \oModel\ is \DC.

Suppose every constraint in \sModel\ is \DC.

The first question is what is the correspondence between the two models in this case,
i.e. if \sModel\ is \DC, with what domains should we check \oModel\ for \DC.
The correspondence should be semantical. That is, for every placing of every onset whether
it is possible to place other onsets so that each constraint of a model is satisfied.
While all the \SPACINGONE\ constraints are \DC, the domains restricted to their values
are the same in each of their period.
So the correspondence between the models can be described only on the first periods.
It is $d \in D(X_i)$ where $d \in S_l$ and $1 \leq i \leq p_l$ iff $i \in D(V_{l,d})$.

If every constraint in \sModel\ is \DC, then also every \AllDifferent\ in \oModel\ is \DC,
because the bipartite graph in the propagator of \SPACINGONE\ of voice $l$
is a supergraph of \AllDifferent\ for voice $l$ in \oModel.
So while the \SPACINGONE\ is \DC, all edges in its graph belong to some maximum matching.
If we remove each \zero\ node, we obtain a graph of the corresponding \AllDifferent.
Each of the edges in the new graph still belong to some maximum matching in it,
because removing nodes preserves maximum matchings, because
while each edge is in some maximum matching and
each node is in at most one edge of a matching,
each maximum matching loses exactly one edge per removed node.

If every constraint in \sModel\ is \DC, then also every difference constraint in \oModel\ is \DC.
The semantics of a difference constraint $V_{l_1,i_1} + j_1 p_{l_1} \neq V_{l_2,i_2} + j_2 p_{l_2}$ is that
the onset $i_1$ in the $j_1+1$-st period of the voice $l_1$ cannot be played at the same time as
the onset $i_2$ in the $j_2+1$-st period of the voice $l_2$.
If a difference constraint was not \DC, this would mean that for some placing of the
onset $i_1$ it is not possible to place the onset $i_2$ so that they would be played
on different times in the $j_1+1$-st and $j_2+1$-st periods of their voices respectively.
This is possible only if the placing of $i_2$ is fixed ($D(V_{l_2,i_2}) = \set{x}$).
This, however, means in the \sModel\ that the value of this onset has
only one place where it can occur in one period of $l_2$
($i_2 \in D(X_y)$ iff $y = x$ for $1\leq y \leq p_{l_2}$),
so the propagator of \SPACINGONE\ for this voice would instantiate it and
hence remove other values from the domain of the variable of the place ($D(X_x) = \set{i_2}$).
This means for \SPACINGONE\ of $l_1$ instantiating the variable of the place to \zero\ ($D(Y^{l_1}_x) = \set{\zero}$),
so assuming \sModel\ is \DC\ we have contradiction with possibility of placing $i_1$
so that the difference constraint cannot be satisfied
($x \notin D(V_{l_1,i_1})$ because $i_1 \notin D(Y^{l_1}_x) = \set{\zero}$),
so the difference constraint must be \DC.

To show strictness, consider instance of \emph{Asynchronous rhythms} with two voices,
first having $m_1 = p_1 = k_1 = 2$ and the second one $m_2 = 1$, $p_2 = 3$, $k_2 = 2$.
Additionally, the very first place has to contain an onset of the first voice.
Domains of variables of \oModel\ for this instance are
$D(V_{1,a}) = D(V_{1,b}) = \set{1,2}$ and $D(V_{2,c}) = \set{2,3}$
and the constraints are $\AllDifferent(\set{V_{1,a}, V_{1,b}})$ and
$V_{1,i} \neq V_{2,c}$,
$V_{1,i} + 2 \neq V_{2,c}$,
$V_{1,i} \neq V_{2,c} + 3$,
$V_{1,i} + 2 \neq V_{2,c} + 3$ for $i \in \set{a,b}$.
All the constraints of this model are \DC.
However, domains of variables of \sModel\ are
$D(X_1) = \set{a,b}$ and $D(X_2) = \ldots = D(X_6) = \set{a,b,c,\zero}$ where
$S_1 = \set{a,b}$ and $S_2 = \set{c}$.
The propagation of \SPACINGONE\ for the first voice will remove all values not in $S_1$ from $D(X_2)$ up to $D(X_4)$.
This leaves the \SPACINGONE\ for the second voice unsatisfiable.
So \sModel\ determines that the instance is unsatisfiable without search while \oModel\ is not able to do so.
\qed
\end{proof}

Drawback of \sModel\ is that,
due to interchangeability of onsets in one voice, it generates a lot of symmetrical solutions.
This observation led to the second model (\bsModel).
It is the same as the first one, except it uses a version of \SPACINGONE\ that
does not distinguish between different values in one voice,
which is a piece of information a composer does not need when constructing rhythmical patterns.
This new constraint, denoted $\BLINDSPACING(d, m, p, k, X)$, can be defined as $\SPACINGONE(S, p, k, X)$
where $S$ is a multiset of $m$ values $d$.
In other words, $\BLINDSPACING(d, m, p, k, X)$ is satisfied iff
there is exactly $m$ occurrences of $d$ in the first $p$ places of $X$
and this pattern is repeated $k$ times.
Filtering algorithm of \BLINDSPACING\ takes $O(n)$ time down a branch of a search tree
as we can avoid finding the maximum matching step.
After folding of the domains $P_i = \bigcap_{j=0}^{k-1} D(X_{j p + i})$ for $1 \leq i \leq p$,
the filtering algorithm simply counts $u = |\setcomp{i}{d \in P_i, 1 \leq i \leq p}|$
and $v = |\setcomp{i}{P_i = \set{d}, 1 \leq i \leq p}|$.
The algorithm fails iff $u < m$ or $v > m$,
if $u = m$ it instantiates all $X_i$ with $d \in D(X_i)$ to $d$ and
if $v = m$ it removes $d$ from all domains that $D(X_i) \neq \set{d}$.
Of course, the algorithm reflects the folded domains $P_i$ back to the domains $D(X_i)$
by removing $d$ if $d \notin P_i$ or instantiating to $d$ if $P_i = \set{d}$
and repeats all the instantiations and all the removals on the appropriate places in each period.

\begin{theorem}\label{thm:correctBS}
Enforcing \DC\ on \BLINDSPACING\ can be done in $O(n)$ time down a branch of the search tree.
\end{theorem}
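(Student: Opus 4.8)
The plan is to show that the folding-and-counting procedure described just before the statement is a correct \DC\ filter and then to bound its running time. Write $P_i = \bigcap_{j=0}^{k-1} D(X_{jp+i})$ for the folded domains and, exactly as in the algorithm, $u = |\setcomp{i}{d \in P_i, 1 \le i \le p}|$ and $v = |\setcomp{i}{P_i = \set{d}, 1 \le i \le p}|$. The first step is to characterise the supports. By the periodicity property of \SPACINGONE\ established in the proof of Theorem~\ref{thm:tract1}---the variables $X_i, X_{p+i}, \ldots, X_{(k-1)p+i}$ must share a common value---a support of $\BLINDSPACING(d,m,p,k,X)$ is exactly a choice of a value $w_i \in P_i$ for each folded position $i$, extended periodically by $X_{jp+i} = w_i$, such that $|\setcomp{i}{w_i = d, 1 \le i \le p}| = m$. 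The $w_i$ are chosen independently, the positions with $P_i = \set{d}$ are precisely those forced to take $d$, and the positions with $d \in P_i$ are precisely those that may take $d$; hence a support exists iff every $P_i$ is non-empty and $v \le m \le u$.

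Next I would derive the pruning rules by a short case analysis on a folded position $i$: the value $d$ lies in some support at $i$ iff $d \in P_i$ and, unless $P_i = \set{d}$, also $v < m$; a value $e \neq d$ of $D(X_{jp+i})$ lies in some support at $i$ iff $e \in P_i$ and, when $d \in P_i$, also $m < u$. Reflecting these statements through the folding and back onto $X$ gives precisely the operations the algorithm performs: fail when $u < m$ or $v > m$ (or some $P_i$ is empty); when $u = m$, instantiate to $d$ every $X_{jp+i}$ with $d \in P_i$; when $v = m$, delete $d$ from every $X_{jp+i}$ with $P_i \neq \set{d}$; and in every case intersect each $D(X_{jp+i})$ with $P_i$. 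So after one sweep only values lying in some support survive. One further checks that one sweep is a fixpoint: the $u = m$ action raises $v$ to $u$ and empties the set acted on by the $v = m$ rule, while the $v = m$ action lowers $u$ to $v$ and leaves the $u = m$ rule re-instantiating only already forced positions, and intersecting domains with $P_i$ does not change any $P_i$.

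For the running time, maintain along the branch, for each folded position $i$, the number of periods in which $d$ still belongs to $D(X_{jp+i})$ together with a flag for $P_i = \set{d}$; the initial folding costs $O(n)$, and every later deletion of a value from some $D(X_{jp+i})$ updates these counters, hence $u$ and $v$, in $O(1)$. Each of the at most $p$ folded positions is instantiated to $d$ or loses $d$ at most once down the branch, and each such event rewrites its $k$ period-variables, for $O(pk) = O(n)$ in total; the $O(1)$ counter updates also sum to $O(n)$ because a value leaves a domain at most once. This yields the claimed $O(n)$ bound. I expect the main obstacle to be the middle step: making the case analysis tight, so that the crude counting conditions coincide \emph{exactly} with membership in some support---neither deleting a value some support uses nor keeping one no support uses---and confirming the single-sweep fixpoint claim, with care for the degenerate cases $m = u$, $m = v$ and $P_i = \emptyset$, and for the interchangeability of the non-$d$ values that makes the maximum-matching step of the \SPACINGONE\ propagator unnecessary here.
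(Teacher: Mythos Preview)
Your proposal is correct and follows essentially the same approach as the paper's proof: both reduce the question to one folded period, justify the failure and pruning conditions directly from the counts $u$ and $v$, and obtain the $O(n)$ bound by maintaining $u$ and $v$ incrementally with at most $p$ updates of cost $O(k)$ each down a branch. Your upfront characterisation of supports as choices $w_i\in P_i$ and your explicit one-sweep fixpoint argument are small additions the paper leaves implicit, but the substance is identical.
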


\begin{proof}
In order to show that the filtering algorithm described above really establishes \DC,
we need to show that it fails iff the constraint is unsatisfiable and
if it does not fail the constraint is \DC.

Thanks to the folding of the domains, reflecting the folded domains back and
repetition of instantiations and removals in each period,
for the domains on the same places of each period ($\setcomp{D(X_{jp+i})}{0 \leq j < k}$ for each $i$)
it holds that one contain $d$ or is $\set{d}$ iff all the others contain $d$ or are $\set{d}$ respectively.
This allows us to restrict out reasoning only to one period.

First, we show that the algorithm fails iff the constraint is unsatisfiable.
After what is shown in the previous paragraph the only way how to falsify the constraint is
to put either too many or too little values $d$ to one period.
Obviously, $u < m$ means that there is not enough variables in one period with $d$ in their domains
and $v > m$ means that $d$ is assigned to too many variables in one period.
So $u < m$ or $v > m$ holds iff the constraint is unsatisfiable,
so the algorithm fails iff the constraints is unsatisfiable.

Second, we show that if the algorithm removes a value from the domain of a variable,
there is no solution of the constraint with this value assigned to this variable.
Suppose the algorithm removes $d$ from $D(X_i)$.
Except the cases considered in the first paragraph,
the algorithm does so only when $v = m$ and $i$ is not one of the places counted in $v$.
So if there was a solution with $X_i = d$,
the number of occurrences of $d$ in one period would be $m+1$, which is a contradiction.
Suppose the algorithm removes $d' \neq d$ from $D(X_i)$.
Except the cases considered in the first paragraph,
the algorithm does so only when $u = m$ and $i$ is one of the places counted in $u$.
So if there was a solution with $X_i = d'$,
the number of occurrences of $d$ in one period would be $m-1$, which is a contradiction.

Finally, we show that if the constraint is satisfiable and
there is no solution of the constraint with a value assigned to a variable,
the algorithm removes the value from the domain of the variable.
Suppose the constraint is satisfiable and
there is no solution of the constraint with a value $d'$ assigned to a variable $X_i$.
Except the cases considered in the first paragraph,
the constraint may have no solution only in two cases:
The value $d$ is assigned to too many or too little variables in one period.
The constraint is satisfiable without the assignment $X_i = d'$.
So the first case is possible only if there are exactly $m$ variables in one period
with $d$ assigned to them and the assignment $X_i = d'$ increases this number.
This means that $X_i$ is not instantiated to $d$ yet, $d' = d$ and $v = m$ holds,
so the algorithm will remove $d$ from the domains of all the variables that
are not instantiated to $d$ yet, hence also from $D(X_i)$.
The second case is possible only if there are exactly $m$ variables in one period
with $d$ in their domains and the assignment $X_i = d'$ decreases this number.
This means that $X_i$ contains $d$ in its domain, $d' \neq d$ and $u = m$ holds,
so the algorithm will assign $d$ to all the variables that contain $d$ in their domains,
hence remove $d'$ from $D(X_i)$.

The algorithm runs in $O(n)$ time down a branch of the search tree, because
the folded domains $P_i$ are represented implicitly.
At the beginning of the search, folding and reflecting back to the domains takes $O(pk)$,
because it is done simply by, for every $1 \leq i \leq p$,
removing $d$ from each $D(X_{jp+i})$, $0 \leq j < k$ if it is missing in some and
instantiating each $X_{jp+i}$, $0 \leq j < k$ to $d$ if some already is instantiated.
During the search it is sufficient just to
update $u$ and $v$ upon removal of $d$ or instantiation to $d$.
On each update, the same action (removal or instantiation) has to be repeated in each of $k$ period.
Down a branch there will be at most $p$ such updates, because after $p$ updates
the pattern of $d$-s in a period is fully determined.
That makes $O(pk)$ steps down a branch.
While $pk \leq n$, the filtering runs in $O(n)$ time down a branch of the search tree.
\qed
\end{proof}

The last model (\srModel) is the same as \sModel,
but it is additionally making use of the incomplete filtering rule for \SPACINGH\ 
(described in the previous section) between each pair of constraints.

\section{Experimental results}

To compare performance of the different models,
we carried out a series of experiments on random instances of the \emph{Asynchronous rhythms} problem.
The instances were generated for a fixed number of voices $h$,
a mean length of the pattern of the first voice $p_1$ and
a fixed number of repetitions of the last voice $k_h$.
The models were tested on ten instances for each tuple of $(h, p_1, k_h)$.
The generation of the instances followed the philosophy that
the first voice should be the base voice with short pattern and small number of onsets
and the other voices should have richer patterns. 
So the pattern of the second voice was $4 \pm 1$ beats longer that the pattern of the first voice and
each following voice had pattern in average two times longer $\pm 3$ beats than two voices before.
Numbers of repetitions were set so that the voices overlap as much as possible.
And the total number of the onsets was approximately $75\%$ of the length of the sequence
uniformly distributed between the voices.
The experiments were run with 5 minute timeout and with a heuristic under which all models were performing better.
The left side of Table~\ref{t:t1} shows performance of the models on the \emph{basic} Asynchronous rhythms problem.
The right side of Table~\ref{t:t1} shows performance of the models on
the \emph{extended} Asynchronous Rhythms problem, where the composer applies 
 additional constraints that some onsets are forbidden in certain positions.
We randomly removed $10\%$ of values from the domains to model this restriction.
\bsModel\ was not tested on the extended problem, because in this case onsets are not interchangeable due 
to additional constraints on onsets.
Experiments were run with CHOCO Solver 2.1.5 on  Intel Xeon 3 CPU 2.0Ghz, 4GB RAM.

\begin{table}
\begin{center}
{\scriptsize
\caption{
	\label{t:t1}Number of solved instances / average time to solve in sec / average number of backtracks in thousands.
}
\begin{tabular}{|ccc||crr|crr|crr|crr||crr|crr|crr|}
   \hline
  & & & \multicolumn{12}{|c||}{Basic problem} & \multicolumn{9}{c|}{Extended problem} \\
  $h$ & $p_1$ & $k_h$ & \multicolumn{3}{|c}{\oModel} & \multicolumn{3}{c}{\sModel} & \multicolumn{3}{c}{\srModel} & \multicolumn{3}{c||}{\bsModel} & \multicolumn{3}{c}{\oModel} & \multicolumn{3}{c}{\sModel} & \multicolumn{3}{c|}{\srModel} \\
  \hline
  3 & 12 & 2 & 9 & 4.99 & 99.76 & 9 & 9.54 & 61.03 & 9 & 4.83 & 24.38 & \textbf{10} & \textbf{0.17} & \textbf{4.58} & 8 & \textbf{0.72} & \textbf{4.29} & \textbf{10} & 21.85 & 143.66 & \textbf{10} & 9.82 & 42.08 \\
  3 & 12 & 3 & \textbf{10} & 0.84 & 2.80 & \textbf{10} & 1.13 & 1.80 & \textbf{10} & 0.60 & 0.46 & \textbf{10} & \textbf{0.09} & \textbf{0.11} & 9 & 8.56 & 180.02 & \textbf{10} & 0.17 & 0.02 & \textbf{10} & \textbf{0.12} & \textbf{0.00} \\
  3 & 12 & 4 & \textbf{10} & 3.92 & 35.17 & \textbf{10} & 4.86 & 24.88 & \textbf{10} & 1.50 & 2.92 & \textbf{10} & \textbf{0.18} & \textbf{0.33} & \textbf{10} & 0.57 & 0.98 & \textbf{10} & 0.31 & 0.04 & \textbf{10} & \textbf{0.20} & \textbf{0.02} \\
  3 & 18 & 2 & 7 & 21.02 & 244.97 & 7 & 20.83 & 102.16 & 7 & 9.86 & 35.60 & \textbf{10} & \textbf{0.64} & \textbf{18.05} & 5 & 7.22 & 138.13 & \textbf{9} & 29.60 & 109.82 & \textbf{9} & \textbf{5.96} & \textbf{13.47} \\
  3 & 18 & 3 & 8 & 13.48 & 128.47 & 8 & 16.56 & 58.54 & 8 & 7.58 & \textbf{22.13} & \textbf{10} & \textbf{1.01} & 26.95 & 6 & 10.38 & 71.18 & 8 & \textbf{0.25} & \textbf{0.03} & \textbf{9} & 6.86 & 14.60 \\
  3 & 18 & 4 & 7 & 42.77 & 270.29 & 7 & 44.88 & 119.39 & 7 & 9.25 & \textbf{14.21} & \textbf{10} & \textbf{1.05} & 17.42 & \textbf{10} & 31.71 & 137.36 & \textbf{10} & 0.44 & 0.30 & \textbf{10} & \textbf{0.34} & \textbf{0.13} \\
  3 & 24 & 2 & 6 & 5.70 & 48.93 & 7 & 11.39 & 29.52 & 7 & \textbf{5.56} & \textbf{8.31} & \textbf{10} & 20.80 & 1409.94 & 3 & 10.37 & 176.43 & 7 & \textbf{3.31} & \textbf{10.80} & \textbf{8} & 37.01 & 63.47 \\
  3 & 24 & 3 & 3 & 1.58 & 6.79 & 3 & 1.20 & 2.62 & 3 & \textbf{0.46} & \textbf{0.34} & \textbf{10} & 4.49 & 90.87 & 2 & 121.36 & 286.81 & \textbf{9} & 47.57 & 166.14 & \textbf{9} & \textbf{9.58} & \textbf{21.07} \\
  3 & 24 & 4 & 2 & 8.38 & \textbf{40.97} & 2 & 22.11 & 41.56 & 4 & 86.38 & 59.74 & \textbf{10} & \textbf{5.10} & 91.03 & 4 & 51.02 & 195.04 & \textbf{10} & 46.92 & 114.67 & \textbf{10} & \textbf{5.71} & \textbf{9.36} \\
  \hline
  4 & 12 & 2 & 9 & 3.78 & 63.49 & 9 & 6.37 & 35.25 & 9 & 2.48 & \textbf{8.06} & \textbf{10} & \textbf{0.43} & 10.00 & 7 & \textbf{3.15} & 46.06 & \textbf{10} & 8.62 & 20.04 & \textbf{10} & 4.69 & \textbf{7.74} \\
  4 & 12 & 3 & 8 & 26.01 & 523.55 & 7 & 0.90 & 1.19 & 7 & 0.64 & \textbf{0.41} & \textbf{10} & \textbf{0.36} & 4.90 & \textbf{10} & 39.23 & 450.50 & \textbf{10} & 4.76 & 15.61 & \textbf{10} & \textbf{3.06} & \textbf{8.95} \\
  4 & 12 & 4 & 8 & 21.81 & 546.20 & 8 & 31.88 & 208.21 & 9 & 38.60 & 63.99 & \textbf{10} & \textbf{0.39} & \textbf{4.54} & 9 & \textbf{1.70} & 14.58 & \textbf{10} & 9.59 & 36.59 & \textbf{10} & 3.91 & \textbf{9.87} \\
  4 & 18 & 2 & 8 & 29.46 & 438.73 & 8 & 26.11 & 88.45 & 8 & 17.56 & 50.75 & \textbf{10} & \textbf{0.61} & \textbf{20.28} & 5 & 62.00 & 837.42 & \textbf{8} & 3.13 & 10.41 & \textbf{8} & \textbf{1.30} & \textbf{2.52} \\
  4 & 18 & 3 & 6 & 1.40 & 12.64 & 6 & 1.95 & 5.02 & 6 & \textbf{1.04} & \textbf{1.44} & \textbf{10} & 1.41 & 20.64 & 7 & 6.89 & 20.61 & \textbf{10} & 13.20 & 17.57 & \textbf{10} & \textbf{6.75} & \textbf{8.35} \\
  4 & 18 & 4 & 5 & 5.06 & 15.08 & 5 & 10.33 & 11.09 & 5 & \textbf{1.64} & \textbf{0.63} & \textbf{10} & 2.85 & 36.14 & 8 & 19.17 & 144.31 & \textbf{10} & 1.43 & 1.35 & \textbf{10} & \textbf{0.78} & \textbf{0.39} \\
  4 & 24 & 2 & 5 & \textbf{0.14} & 0.01 & 5 & 0.30 & \textbf{0.00} & 6 & 37.27 & 37.70 & \textbf{8} & 2.06 & 54.92 & 3 & 51.49 & 501.10 & 7 & \textbf{6.69} & \textbf{18.52} & \textbf{8} & 25.07 & 38.43 \\
  4 & 24 & 3 & 4 & 9.41 & 23.10 & 4 & 14.15 & 15.90 & 5 & \textbf{7.43} & \textbf{4.34} & \textbf{9} & 24.67 & 507.45 & 2 & \textbf{1.41} & \textbf{0.44} & \textbf{8} & 17.49 & 23.72 & \textbf{8} & 4.05 & 4.50 \\
  4 & 24 & 4 & 1 & 1.35 & 1.24 & 1 & 1.41 & 0.84 & 1 & \textbf{0.65} & \textbf{0.17} & \textbf{10} & 8.39 & 125.89 & 4 & 27.64 & 101.35 & \textbf{9} & 5.38 & 10.56 & \textbf{9} & \textbf{3.16} & \textbf{4.93} \\
  \hline
  5 & 12 & 2 & \textbf{10} & 0.98 & 6.69 & \textbf{10} & 0.71 & 0.52 & \textbf{10} & 0.53 & 0.32 & \textbf{10} & \textbf{0.08} & \textbf{0.07} & 6 & 1.30 & 11.45 & \textbf{10} & \textbf{0.26} & 0.05 & \textbf{10} & 0.28 & \textbf{0.03} \\
  5 & 12 & 3 & 7 & \textbf{0.07} & 0.02 & 7 & 0.37 & \textbf{0.01} & 9 & 26.66 & 31.66 & \textbf{10} & 0.50 & 4.82 & 9 & 3.96 & 57.02 & \textbf{10} & 0.31 & 0.04 & \textbf{10} & \textbf{0.27} & \textbf{0.03} \\
  5 & 12 & 4 & \textbf{10} & 1.40 & 2.94 & \textbf{10} & 5.18 & 4.79 & \textbf{10} & 1.05 & \textbf{0.47} & \textbf{10} & \textbf{0.32} & 0.50 & \textbf{10} & 13.77 & 134.70 & \textbf{10} & 0.13 & 0.00 & \textbf{10} & \textbf{0.13} & \textbf{0.00} \\
  5 & 18 & 2 & 6 & \textbf{0.43} & 0.18 & 7 & 1.39 & 0.42 & 7 & 0.65 & \textbf{0.12} & \textbf{10} & 0.51 & 5.24 & 6 & 20.51 & 80.34 & 9 & \textbf{5.55} & \textbf{9.45} & \textbf{10} & 25.57 & 22.88 \\
  5 & 18 & 3 & 7 & 6.52 & 6.77 & 7 & 6.61 & 2.36 & 7 & 3.64 & \textbf{1.04} & \textbf{10} & \textbf{0.70} & 12.51 & 7 & \textbf{0.93} & \textbf{7.00} & 9 & 14.12 & 38.74 & \textbf{10} & 16.38 & 31.21 \\
  5 & 18 & 4 & 7 & 16.95 & 14.71 & 7 & 38.33 & 17.58 & 7 & \textbf{2.04} & \textbf{0.53} & \textbf{10} & 6.74 & 83.41 & \textbf{10} & 0.30 & 0.12 & \textbf{10} & \textbf{0.18} & 0.00 & \textbf{10} & 0.18 & \textbf{0.00} \\
  5 & 24 & 2 & 5 & 38.12 & 31.28 & 4 & \textbf{2.56} & \textbf{1.24} & 5 & 8.11 & 1.77 & \textbf{10} & 29.43 & 841.65 & 1 & 0.82 & 0.10 & \textbf{8} & 0.45 & 0.04 & \textbf{8} & \textbf{0.45} & \textbf{0.01} \\
  5 & 24 & 3 & 6 & 34.80 & 18.90 & 5 & \textbf{0.66} & \textbf{0.06} & 6 & 8.71 & 0.92 & \textbf{9} & 4.26 & 35.88 & 4 & 54.11 & 76.50 & \textbf{10} & 16.73 & 15.73 & \textbf{10} & \textbf{6.08} & \textbf{3.64} \\
  5 & 24 & 4 & 3 & \textbf{19.47} & 106.09 & 4 & 33.87 & 47.46 & 5 & 29.96 & \textbf{27.82} & \textbf{8} & 21.05 & 124.15 & 6 & 4.66 & 10.80 & \textbf{10} & 1.02 & 0.46 & \textbf{10} & \textbf{0.82} & \textbf{0.36} \\
  \hline
\end{tabular}
}
\end{center}
\end{table}

As we can see in the results, \bsModel\ solved almost all instances and,
where comparable, it was the fastest and needed the least backtracks
in solving basic model. \bsModel\ is so successful because it removes symmetries and
the filtering algorithm of \BLINDSPACING\ runs in $O(n)$ down a branch.
On the basic problem, \sModel\ is not obviously better than \oModel,
however \sModel\ performs much better on the extended problem and
it needs significantly less backtracks than \oModel.
That supports the theory that \sModel\ achieves more propagation than \oModel.
Finally, the additional rule significantly improves performance of \srModel\ against \sModel,
especially the number of backtracks is lower by order of magnitude, which
shows that the rule really facilitates propagation between the \SPACING\ constraints.

\section{Conclusions}

The global $\SPACING$ constraint is useful in modeling events that are distributed over time,
like learning units scheduled over a study program or repeated patterns in music compositions.
We have investigated theoretical properties of the constraint and
shown that enforcing domain consistency (\DC) is intractable even in very restricted cases.
On the other hand we have identified two tractable restrictions
and implemented efficient \DC\ filtering algorithms for one of them.
The algorithm takes $O(p^2 k + p^{2.5})$ time down a branch of the search tree.
We have also proposed an incomplete filtering algorithm for one of the intractable cases.
We have experimentally evaluated performance of the algorithms on a music composition
problem and demonstrated that our filtering algorithms outperform the 
state-of-the-art approach for solving this problem in both, speed and number of backtracks.


\end{document}